\newtheorem{theorem}{Theorem}
\newtheorem{lemma}{Lemma}
\theoremstyle{definition}
\newtheorem{definition}{Definition}
\newtheorem{example}{Example}
\newcommand{\mo}[1]{\llbracket#1\rrbracket}
\newcommand{\mwrt}[2]{\llbracket#1\rrbracket(#2)}
\newcommand{\mwrs}[3]{\llbracket#1\rrbracket_{#3}(#2)}
\newcommand{\aleq}[1][]{\sqsubseteq_{#1}}
\newcommand{\ee}[1][I]{\rhd_{#1}}
\newcommand{\bezem}{\mathcal{M}_\mathsf{Gr(P)}}
\newcommand{\bezemTP}{\mathcal{T}_\mathsf{Gr(P)}}
\title{Equivalence of two Fixed-Point Semantics for Definitional Higher-Order
Logic Programs\thanks{This research was supported by the project ``Handling Uncertainty in Data Intensive Applications'',
co-financed by the European Union (European Social Fund) and Greek national funds,
through the Operational Program ``Education and Lifelong Learning'' of the National
Strategic Reference Framework (NSRF) - Research Program: THALES, Investing in
knowledge society through the European Social Fund.}}
\author{Angelos Charalambidis
\institute{University of Athens \\ Athens, Greece}
\email{a.charalambidis@di.uoa.gr}
\and
Panos Rondogiannis
\institute{University of Athens \\ Athens, Greece}
\email{prondo@di.uoa.gr}
\and
Ioanna Symeonidou
\institute{University of Athens \\ Athens, Greece}
\email{i.symeonidou@di.uoa.gr}
}
\begin{document}
\maketitle

\begin{abstract}
  Two distinct research approaches have been proposed for assigning a purely
  extensional semantics to higher-order logic programming. The former approach
  uses classical domain-theoretic tools while the latter builds on a fixed-point
  construction defined on a syntactic instantiation of the source program. The
  relationships between these two approaches had not been investigated until now.
  In this paper we demonstrate that for a very broad class of programs, namely
  the class of {\em definitional programs} introduced by W. W. Wadge, the two
  approaches coincide (with respect to ground atoms that involve symbols of the
  program). On the other hand, we argue that if existential higher-order
  variables are allowed to appear in the bodies of program rules, the two approaches
  are in general different. The results of the paper contribute to a better
  understanding of the semantics of higher-order logic programming.
\end{abstract}

\section{Introduction}
Extensional higher-order logic programming has been proposed~\cite{Wa91a,Bezem99,Bezem01,KRW05,CharalambidisHRW13,CharalambidisER14}
as a promising generalization of classical logic programming. The key idea behind this paradigm is
that the predicates defined in a program essentially denote sets and therefore one can use standard
extensional set theory in order to understand their meaning and reason about them. The main difference
between the extensional and the more traditional {\em intensional} approaches to higher-order logic 
programming~\cite{MN2012,CKW93-187} is that the latter approaches have a much richer syntax and 
expressive capabilities but a non-extensional semantics.

Actually, despite the fact that only very few articles have been written regarding extensionality in higher-order logic programming, two main semantic approaches can be identified. The work described in~\cite{Wa91a,KRW05,CharalambidisHRW13,CharalambidisER14}
uses classical domain-theoretic tools in order to capture the meaning of higher-order logic programs.
On the other hand, the work presented in~\cite{Bezem99,Bezem01} builds on a fixed-point
construction defined on a syntactic instantiation of the source program
in order to achieve an extensional semantics. Until now, the relationships between
the above two approaches had not yet been investigated.

In this paper we demonstrate that for a very broad class of programs, namely
the class of {\em definitional programs} introduced by W. W. Wadge~\cite{Wa91a},
the two approaches coincide. Intuitively, this means that for any given definitional
program, the sets of true ground atoms of the program are identical under the two
different semantic approaches. This result is interesting since it suggests that
definitional programs are of fundamental importance for the further study of extensional
higher-order logic programming. On the other hand, we argue that if we try to slightly
extend the source language, 
the two approaches give different
results in general. Overall, the results of the paper contribute to a better
understanding of the semantics of higher-order logic programming and pave the
road for designing a realistic extensional higher-order logic programming language.

The rest of the paper is organized as follows. Section~\ref{section2} briefly introduces
extensional higher-order logic programming and presents in an intuitive way the two existing
approaches for assigning meaning to programs of this paradigm. Section~\ref{section3} contains 
background material, namely the syntax of definitional programs and the formal details behind the
two aforementioned semantic approaches. Section~\ref{section4} demonstrates the
equivalence of the two semantics for definitional programs. Finally, Section~\ref{section5}
concludes the paper with discussion regarding non-definitional programs and
with pointers to future work.

\section{Intuitive Overview of the two Extensional Approaches}\label{section2}
In this section we introduce extensional higher-order logic programming and 
present the two existing approaches for assigning meaning to programs of this paradigm. 
Since these two proposals were initially introduced by W. W. Wadge and M. Bezem
respectively, we will refer to them as {\em Wadge's semantics} and {\em Bezem's semantics}
respectively. The key idea behind both approaches is that in order to achieve
an extensional semantics, one has to consider a fragment of higher-order logic
programming that has a restricted syntax.

\subsection{Extensional Higher-Order Logic Programming}
The main differences between extensional and intensional higher-order logic programming 
can be easily understood through two simple examples (borrowed from~\cite{CharalambidisHRW13}).
Due to space limitations, we avoid a more extensive discussion of this issue; the interested
reader can consult~\cite{CharalambidisHRW13}.
\begin{example}\label{intensional-predicate}
Suppose we have a database of professions, both of their membership
and their status. We might have rules such as:
\[
\begin{array}{l}
\mbox{\tt engineer(tom).}\\
\mbox{\tt engineer(sally).}\\
\mbox{\tt programmer(harry).}
\end{array}
\]
with {\tt engineer} and {\tt programmer} used as predicates. In
intensional higher-order logic programming we could also have rules
in which these are arguments, eg:
\[
\begin{array}{l}
\mbox{\tt profession(engineer).}\\
\mbox{\tt profession(programmer).}
\end{array}
\]
Now suppose {\tt tom} and {\tt sally} are also avid users of
Twitter. We could have rules:
\[
\begin{array}{l}
\mbox{\tt tweeter(tom).}\\
\mbox{\tt tweeter(sally).}
\end{array}
\]
The predicates {\tt tweeter} and {\tt engineer} are
equal as sets (since they are true for the same objects, namely {\tt
tom} and {\tt sally}). If we attempted to understand the above
program from an extensional point of view, then we would have to
accept that {\tt profession(tweeter)} must also hold (since {\tt
tweeter} and {\tt engineer} are indistinguishable as sets). It is
clear that the extensional interpretation in this case is completely
unnatural. The program can however be understood intensionally: the
predicate {\tt profession} is true of the {\em name} {\tt engineer}
(which is different than the name {\tt tweeter}).\qed
\end{example}
On the other hand, there are cases where predicates can be
understood extensionally:
\begin{example}\label{extensional-predicate}
Consider a program that consists only of the following rule:
\[
\begin{array}{l}
\mbox{\tt p(Q):-Q(0),Q(1).}
\end{array}
\]
In an extensional language, predicate {\tt p} above can be
intuitively understood in purely set-theoretic terms: {\tt p} is the
set of all those sets that contain both {\tt 0} and {\tt 1}.

It should be noted that the above program is also a syntactically
acceptable program of the existing intensional logic programming
languages. The difference is that in an extensional language the
above program has a purely set-theoretic semantics.\qed
\end{example}

From the above examples it can be understood that extensional higher-order
logic programming sacrifices some of the rich syntax of intensional higher-order
logic programming in order to achieve semantic clarity.

\subsection{Wadge's Semantics}
The first proposal for an extensional semantics for higher-order logic programming
was given in~\cite{Wa91a} (and later refined and extended in~\cite{KRW05,CharalambidisHRW13,CharalambidisER14}).
The basic idea behind Wadge's approach is that if we consider a properly restricted
higher-order logic programming language, then we can use standard ideas from
denotational semantics in order to assign an extensional meaning to programs.
The basic syntactic assumptions introduced by Wadge in~\cite{Wa91a} are the following:
\begin{itemize}
\item In the head of every rule in a program, each argument of predicate type must be a variable; all such variables must be distinct.
\item The only variables of predicate type that can appear in the body of a
      rule, are variables that appear in its head.
\end{itemize}
Programs that satisfy the above restrictions are named {\em definitional} in~\cite{Wa91a}.
\begin{example}\label{example1}
The program\footnote{For simplicity reasons, the syntax that we use in our example programs is Prolog-like.
The syntax that we adopt in the next section is slightly different and more convenient for the theoretical
developments that follow.}:
\[
\begin{array}{l}
\mbox{\tt p(a).}\\
\mbox{\tt q(b).}\\
\mbox{\tt r(P,Q):-P(a),Q(b).}
\end{array}
\]
is definitional because the arguments of predicate type in the head of the rule for {\tt r}
are distinct variables. Moreover, the only predicate variables that appear in the body of the
same rule, are the variables in its head (namely {\tt P} and {\tt Q}). \qed
\end{example}

\begin{example}\label{example2}
The program:
\[
\begin{array}{l}
\mbox{\tt q(a).}\\
\mbox{\tt r(q).}
\end{array}
\]
is not definitional because the predicate constant {\tt q} appears as an
argument in the second clause. For a similar reason, the program in
Example~\ref{intensional-predicate} is not definitional. The program:
\[
\begin{array}{l}
\mbox{\tt p(Q,Q):-Q(a).}
\end{array}
\]
is also not definitional because the predicate variable {\tt Q} is used twice in the
head of the above rule. Finally, the program:
\[
\begin{array}{l}
\mbox{\tt p(a):-Q(a).}
\end{array}
\]
is not definitional because the predicate variable {\tt Q} that appears in the
body of the above rule, does not appear in the head of the rule.\qed
\end{example}
As it is argued in~\cite{Wa91a}, if a program satisfies the above two syntactic restrictions,
then it has a {\em unique minimum model} (this notion will be precisely defined in
Section~\ref{section3}). Consider again the program of Example~\ref{example1}. In the minimum
model of this program, the meaning of predicate {\tt p} is the relation $\{{\tt a}\}$ and
the meaning of predicate {\tt q} is the relation $\{{\tt b}\}$. On the other hand, the meaning
of predicate {\tt r} in the minimum model is a relation that contains the pairs $(\{{\tt a}\},\{{\tt b}\})$,
$(\{{\tt a,b}\},\{{\tt b}\})$, $(\{{\tt a}\},\{{\tt a,b}\})$ and $(\{{\tt a,b}\},\{{\tt a,b}\})$.
As remarked by W. W. Wadge (and formally demonstrated in~\cite{KRW05,CharalambidisHRW13}), the minimum
model of every definitional program is monotonic and continuous\footnote{The notion of continuity
will not play any role in the remaining part of this paper.}. Intuitively, monotonicity means
that if in the minimum model the meaning of a predicate is true of a relation, then it is also
true of every superset of this relation. For example, we see that since the meaning of {\tt r}
is true of  $(\{{\tt a}\},\{{\tt b}\})$, then it is also true of $(\{{\tt a,b}\},\{{\tt b}\})$
(because $\{{\tt a,b}\}$ is a superset of $\{{\tt a}\}$).

The minimum model of a given definitional program can be constructed as the least fixed-point
of an operator that is associated with the program, called the {\em immediate consequence operator}
of the program. As it is demonstrated in~\cite{Wa91a,KRW05}, the immediate consequence operator
is monotonic, and this guarantees the existence of the least fixed-point which is constructed
by a bottom-up iterative procedure (more formal details will be given in the next section).
\begin{example}\label{example3}
Consider the definitional program:
\[
\begin{array}{l}
\mbox{\tt q(a).}\\
\mbox{\tt q(b).}\\
\mbox{\tt p(Q):-Q(a).}\\
\mbox{\tt id(R)(X):-R(X).}
\end{array}
\]
In the minimum model of the above program, the meaning of {\tt q} is the relation $\{{\tt a},{\tt b}\}$.
The meaning of {\tt p} is the set of all relations that contain (at least) {\tt a}; more formally,
it is the relation $\{r \mid {\tt a} \in r\}$. The meaning of {\tt id} is the set of all pairs $(r,d)$
such that $d$ belongs to $r$; more formally, it is the relation $\{(r,d)\mid d \in r\}$.\qed
\end{example}
Notice that in the construction of the minimum model, all predicates are initially assigned the
empty relation. The rules of the program are then used in order to improve the meaning assigned to
each predicate symbol. More specifically, at each step of the fixed-point computation, the meaning
of each predicate symbol either stabilizes or it becomes richer than the previous step.
\begin{example}\label{example4}
Consider again the definitional program of the previous example. In the iterative construction
of the minimum model, all predicates are initially assigned the empty relation (of the corresponding
type). After the first step of the construction, the meaning assigned to predicate {\tt q} is
the relation $\{{\tt a},{\tt b}\}$ due to the first two facts of the program. At this same step,
the meaning of {\tt p} becomes the relation $\{r \mid {\tt a} \in r\}$. Also, the meaning of
{\tt id} becomes equal to the relation $\{(r,d)\mid d \in r\}$. Additional iterations will not
alter the relations we have obtained at the first step; in other words, we have reached the
fixed-point of the bottom-up computation.\qed
\end{example}
In the above example, we obtained the meaning of the program in just one step. If the source
program contained recursive definitions, convergence to the least fixed-point would in general
require more steps.

\subsection{Bezem's Semantics}
In~\cite{Bezem99,Bezem01}, M. Bezem proposed an alternative extensional semantics for higher-order
logic programs. Again, the syntax of the source language has to be appropriately restricted.
Actually, the class of programs adopted in~\cite{Bezem99,Bezem01} is a proper superset of the
class of definitional programs. In particular, Bezem proposes the class of {\em hoapata programs} which
extend definitional programs:
\begin{itemize}
\item A predicate variable that appears in the body of a rule, need not
      necessarily appear in the head of that rule.

\item The head of a rule can be an atom that starts with a predicate variable.
\end{itemize}
\begin{example}
All definitional programs of the previous subsection are also hoapata.
The following non-definitional program of Example~\ref{example2} is hoapata:
\[
\begin{array}{l}
\mbox{\tt p(a):-Q(a).}
\end{array}
\]
Intuitively, the above program states that {\tt p} is true of {\tt a}
if there exists a predicate that is defined in the program
that is true of {\tt a}. We will use this program
in our discussion at the end of the paper.

The following program is also hoapata (but not definitional):
\[
\begin{array}{l}
\mbox{\tt P(a,b).}
\end{array}
\]
Intuitively, the above program states that every binary relation is
true of the pair $({\tt a},{\tt b})$.\qed
\end{example}

Given a hoapata program, the starting idea behind Bezem's approach is to take its
``ground instantiation'' in which we replace variables with well-typed terms of
the Herbrand Universe of the program (ie., terms that can be created using only
predicate and individual constants that appear in the program).  For example, given
the program:
\[
\begin{array}{l}
\mbox{\tt q(a).}\\
\mbox{\tt q(b).}\\
\mbox{\tt p(Q):-Q(a).}\\
\mbox{\tt id(R)(X):-R(X).}
\end{array}
\]
the ground instantiation is the following infinite ``program'':
\[
\begin{array}{l}
\mbox{\tt q(a).}\\
\mbox{\tt q(b).}\\
\mbox{\tt p(q):-q(a).}\\
\mbox{\tt id(q)(a):-q(a).}\\
\mbox{\tt p(id(q)):-id(q)(a).}\\
\mbox{\tt id(id(q))(a):-id(q)(a).}\\
\mbox{\tt p(id(id(q))):-id(id(q))(a).}\\
\hspace{2cm} \cdots
\end{array}
\]
One can now treat the new program as an infinite propositional one (ie., each ground atom can be seen
as a propositional one). This implies that we can use the standard least fixed-point construction
of classical logic programming (see for example~\cite{lloyd}) in order to compute the set of atoms
that should be taken as ``true''.  In our example, the least fixed-point will
contain atoms such as {\tt q(a)}, {\tt q(b)}, {\tt p(q)}, {\tt id(q)(a)}, {\tt p(id(q))},
and so on.

A main contribution of Bezem's work was that he established that the least fixed-point
of the ground instantiation of every hoapata program is {\em extensional}. This notion
can intuitively be explained as follows. It is obvious in the
above example that the relations {\tt q} and {\tt id(q)} are equal (they are both true of only the
constant {\tt a}, and therefore they both correspond to the relation $\{{\tt a}\}$).
Therefore, we would expect that (for example) if {\tt p(q)} is true then {\tt p(id(q))}
is also true because {\tt q} and {\tt id(q)} should be considered as interchangeable.
This property of ``interchangeability'' is formally defined in~\cite{Bezem99,Bezem01} and it is
demonstrated that it holds in the least fixed-point of the immediate consequence operator
of the ground instance of every hoapata program.

\subsection{The Differences Between the two Approaches}
It is not hard to see that the two semantic approaches outlined in the previous
subsections, have some important differences. First, they operate on different
source languages. Therefore, in order to compare them we have to restrict Bezem's
approach to the class of definitional programs\footnote{Actually, we could alternatively
extend Wadge's approach to a broader class of programs. Such an extension has already
been performed in~\cite{CharalambidisHRW13}, and we will discuss its repercussions in the
concluding section.}.

The main difference however between the two approaches is the way that the least
fixed-point of the immediate consequence operator is constructed in each case.
In Wadge's semantics the construction starts by initially assigning to every predicate
constant the empty relation; these relations are then improved at each step until
they converge to their final meaning. In other words, Wadge's semantics {\em manipulates
relations}. On the other hand, Bezem's semantics works with the ground instantiation
of the source program and, at first sight, it appears to have a more syntactic flavor.
In our running example, Wadge's approach converges in a single step while Bezem's
approach takes an infinite number of steps in order to converge. However, one can easily verify that
the ground atoms that belong to the least fixed-point under Bezem's semantics, are also
true in the minimum model under Wadge's semantics. This poses the question whether
under both approaches, the sets of ground atoms that are true, are identical.
This is the question that we answer positively in the rest of the paper.

\section{Definitional Programs and their Semantics}
\label{sec:lang}
\label{section3}
In this section we define the source language ${\cal H}$ of definitional
higher-order logic programs. Moreover, we present in a formal way the two
different extensional semantics that have been proposed for such programs,
namely Wadge's and Bezem's semantics respectively.

\subsection{Syntax}
The language ${\cal H}$ is based on a simple
type system that supports two base types: $o$, the boolean domain,
and $\iota$, the domain of individuals (data objects). The composite
types are partitioned into three classes: functional (assigned to
function symbols), predicate (assigned to predicate symbols) and
argument (assigned to parameters of predicates).

\begin{definition}
A type can either be functional, argument, or predicate, denoted by $\sigma$, $\rho$
and $\pi$ respectively and defined as:
\begin{align*}
\sigma & := \iota \mid (\iota \rightarrow \sigma) \\
\pi & := o \mid (\rho \rightarrow \pi) \\
\rho & := \iota \mid \pi
\end{align*}
\end{definition}

We will use $\tau$ to denote an arbitrary type (either functional, argument or predicate one).
As usual, the binary operator $\rightarrow$ is right-associative. A
functional type that is different than $\iota$ will often be written
in the form $\iota^n \rightarrow \iota$, $n\geq 1$. Moreover, it can
be easily seen that every predicate type $\pi$ can be written in the
form $\rho_1 \rightarrow \cdots \rightarrow \rho_n \rightarrow o$,
$n\geq 0$ (for $n=0$ we assume that $\pi=o$).

We proceed by defining the syntax of ${\cal H}$:
\begin{definition}
The alphabet of the higher-order language ${\cal H}$ consists of the following:
\begin{enumerate}
  \item Predicate variables of every predicate type $\pi$ (denoted by capital letters such as
      $\mathsf{P,Q,R,\ldots}$).
  \item Individual variables of type $\iota$ (denoted by capital letters such as
      $\mathsf{X,Y,Z,\ldots}$).
  \item Predicate constants of every predicate type $\pi$ (denoted by lowercase letters such as
      $\mathsf{p,q,r,\ldots}$).
  \item Individual constants of type $\iota$ (denoted by lowercase
      letters such as $\mathsf{a,b,c,\ldots}$).
  \item Function symbols of every functional type $\sigma \neq \iota$ (denoted by lowercase letters such as $\mathsf{f,g,h,\ldots}$).
  \item The logical conjunction constant $\wedge$, the inverse implication constant $\leftarrow$, the left and right parentheses,
        and the equality constant $\approx$ for comparing terms of type $\iota$.
\end{enumerate}
\end{definition}
The set consisting of the predicate variables and the individual variables of ${\cal H}$
will be called the set of {\em argument variables} of ${\cal H}$. Argument variables will
be usually denoted by $\mathsf{V}$ and its subscripted versions.

\begin{definition}
The set of {\em terms} of the higher-order language ${\cal H}$ is defined as follows:
\begin{itemize}
  \item Every predicate variable (respectively predicate constant) of type $\pi$ is a
        term of type $\pi$; every individual variable (respectively individual constant)
        of type $\iota$ is a term of type $\iota$;
  \item if $\mathsf{f}$ is an $n$-ary function symbol and $\mathsf{E}_1, \ldots, \mathsf{E}_n$
        are terms of type $\iota$ then $(\mathsf{f}\ \mathsf{E}_1\cdots\mathsf{E}_n)$ is
        a term of type $\iota$;
  \item if $\mathsf{E}_1$ is a term of type $\rho \rightarrow \pi$ and
        $\mathsf{E}_2$ a term of type $\rho$ then $(\mathsf{E}_1\ \mathsf{E}_2)$ is a term of type $\pi$.
\end{itemize}
\end{definition}
\begin{definition}
The set of {\em expressions} of the higher-order language ${\cal H}$ is defined as follows:
\begin{itemize}
\item A term of type $\rho$ is an expression of type $\rho$;
\item if $\mathsf{E}_1$ and $\mathsf{E}_2$ are terms of type $\iota$, then $(\mathsf{E}_1\approx \mathsf{E}_2)$ is an expression of type $o$.
\end{itemize}
\end{definition}
We write $vars(\mathsf{E})$ to denote the set of all the variables in $\mathsf{E}$.
Expressions (respectively terms) that have no variables will often be referred to as {\em ground expressions} (respectively {\em ground terms}). Expressions of type $o$ will often be referred to as {\em atoms}. We will omit parentheses
when no confusion arises. To denote that an expression $\mathsf{E}$ has type $\rho$ we will often write $\mathsf{E}:\rho$.

\begin{definition}
A {\em clause} is a formula $\mathsf{p}\ \mathsf{V}_1 \cdots \mathsf{V}_n \leftarrow \mathsf{E}_1 \wedge\cdots \wedge \mathsf{E}_m$,
where $\mathsf{p}$ is a predicate constant, $\mathsf{p}\ \mathsf{V}_1\cdots\mathsf{V}_n$
is a term of type $o$ and $\mathsf{E}_1,\ldots,\mathsf{E}_m$ are expressions
of type $o$. The term $\mathsf{p}\ \mathsf{V}_1 \cdots \mathsf{V}_n$ is called the {\em head} of the
clause, the variables $\mathsf{V}_1, \ldots, \mathsf{V}_n$ are the {\em formal parameters} of the
clause and the conjunction $\mathsf{E}_1 \wedge\cdots \wedge \mathsf{E}_m$ is its {\em body}.
A {\em definitional clause} is a clause that additionally satisfies the following
two restrictions:
\begin{enumerate}
\item All the formal parameters are distinct variables (ie., for all $i,j$ such that $1\leq i,j\leq n$,
      $\mathsf{V}_i \neq \mathsf{V}_j$).

\item The only variables that can appear in the body of the clause are its formal parameters
      and possibly some additional individual variables (namely variables of type $\iota$).
\end{enumerate}
A {\em program} $\mathsf{P}$ is a set of definitional program clauses.
\end{definition}
In the rest of the paper, when we refer to ``clauses'' we will mean definitional ones.
For simplicity, we will follow the usual logic programming convention and we will write
$\mathsf{p}\ \mathsf{V}_1 \cdots \mathsf{V}_n \leftarrow \mathsf{E}_1,\ldots,\mathsf{E}_m$
instead of $\mathsf{p}\ \mathsf{V}_1 \cdots \mathsf{V}_n \leftarrow \mathsf{E}_1 \wedge\cdots \wedge \mathsf{E}_m$.

Our syntax differs slightly from the Prolog-like syntax that we have used in Section~\ref{section2}.
However, one can easily verify that we can transform every program from the former syntax to the latter.

%

\begin{definition}
For a program $\mathsf{P}$, we define the Herbrand universe for every argument type $\rho$, denoted by
$U_{\mathsf{P},\rho}$ to be the set of all ground terms of type $\rho$, that can be formed out of the individual constants, function symbols and predicate constants in the program.
\end{definition}
In the following, we will often talk about the ``ground instantiation of a program''. This
notion is formally defined below.
\begin{definition}
A {\em ground substitution} $\theta$ is a finite set of the form $\{ \mathsf{V}_1/\mathsf{E}_1, \ldots, \mathsf{V}_n/\mathsf{E}_n\}$
where the $\mathsf{V}_i$'s are different argument variables and each $\mathsf{E}_i$
is a ground term having the same type as $\mathsf{V}_i$. We write
$dom(\theta) = \{ \mathsf{V}_1, \ldots, \mathsf{V}_n\}$ to denote the domain of $\theta$.
\end{definition}
We can now define the application of a substitution to an expression.
\begin{definition}
Let $\theta$ be a substitution and $\mathsf{E}$ be an expression. Then, $\mathsf{E}\theta$
is an expression obtained from $\mathsf{E}$ as follows:
\begin{itemize}
  \item $\mathsf{E}\theta = \mathsf{E}$ if $\mathsf{E}$ is a predicate or individual constant;
  \item $\mathsf{V}\theta = \theta(\mathsf{V})$ if $\mathsf{V} \in dom(\theta)$; otherwise, $\mathsf{V}\theta = \mathsf{V}$;
  \item $(\mathsf{f}\ \mathsf{E}_1\cdots\mathsf{E}_n)\theta = (\mathsf{f}\ \mathsf{E}_1\theta\cdots\mathsf{E}_n\theta)$;
  \item $(\mathsf{E}_1\ \mathsf{E}_2)\theta = (\mathsf{E}_1\theta\ \mathsf{E}_2\theta)$;
  \item $(\mathsf{E}_1\approx \mathsf{E}_2)\theta = (\mathsf{E}_1\theta\approx \mathsf{E}_2\theta)$.
\end{itemize}
\end{definition}
\begin{definition}
Let $\mathsf{E}$ be an expression and $\theta$ be a ground substitution such that
$vars(\mathsf{E}) \subseteq dom(\theta)$. Then, the ground expression $\mathsf{E}\theta$
is called a {\em ground instantiation} of $\mathsf{E}$. A {\em ground instantiation of a clause}
$\mathsf{p}\ \mathsf{V}_1 \cdots \mathsf{V}_n \leftarrow \mathsf{E}_1,\ldots,\mathsf{E}_m$
with respect to a ground substitution $\theta$ is the formula
$(\mathsf{p}\ \mathsf{V}_1 \cdots \mathsf{V}_n)\theta \leftarrow \mathsf{E}_1\theta,\ldots,\mathsf{E}_m\theta$.
The {\em ground instantiation of a program} $\mathsf{P}$ is the (possibly infinite)
set that contains all the ground instantiations of the clauses of $\mathsf{P}$
with respect to all possible ground substitutions.
\end{definition}

\subsection{Wadge's Semantics}
\label{sec:hosem}
The key idea behind Wadge's semantics is (intuitively) to assign to program
predicates monotonic relations. In the following, given posets $A$ and $B$,
we write $[A \stackrel{m}{\rightarrow} B]$ to denote the set of all monotonic
relations from $A$ to $B$.

Before specifying the semantics of expressions of ${\cal H}$ we need to
provide the set-theoretic meaning of the types of expressions of
${\cal H}$ with respect to an underlying domain. It is customary in
logic programming to take the underlying domain to be the Herbrand
universe $U_{\mathsf{P},\iota}$. In the following definition we define
simultaneously and recursively two things: the semantics $\mo{\tau}$
of a type $\tau$ and a corresponding partial order $\aleq[\tau]$
on the elements of $\mo{\tau}$. We adopt the usual ordering of the truth values
$\mathit{false}$ and $\mathit{true}$, i.e. $\mathit{false} \leq \mathit{false}$,
$\mathit{true}\leq \mathit{true}$ and $\mathit{false} \leq \mathit{true}$.
\begin{definition}
Let $\mathsf{P}$ be a program. Then,
\begin{itemize}
  \item $\mo{\iota} = U_{\mathsf{P},\iota}$ and $\aleq[\iota]$ is the trivial partial order
        that relates every element to itself;
  \item $\mo{\iota^n\rightarrow \iota} =  U_{\mathsf{P},\iota}^n \rightarrow U_{\mathsf{P},\iota}$. A partial order for this case is not needed;
  \item $\mo{o} = \{ \mathit{false}, \mathit{true} \}$ and $\aleq[o]$ is the partial order $\leq$ on truth values;
  \item $\mo{\rho \rightarrow \pi} = [ \mo{\rho} \stackrel{m}{\rightarrow}  \mo{\pi} ]$ and $\aleq[\rho \rightarrow \pi]$
  is the partial order defined as follows: for all $f, g \in \mo{\rho \rightarrow \pi}$,
  $f \aleq[\rho \rightarrow \pi] g$ iff $f(d) \aleq[\pi] g(d)$ for all $d \in \mo{\rho}$.
\end{itemize}
\end{definition}
We now proceed to define Herbrand interpretations and states.
\begin{definition}
A Herbrand interpretation $I$ of a program $\mathsf{P}$ is an interpretation such that:
\begin{enumerate}
  \item for every individual constant $\mathsf{c}$ that appears in $\mathsf{P}$, $I(\mathsf{c}) = \mathsf{c}$;
  \item for every predicate constant $\mathsf{p} : \pi$ that appears in $\mathsf{P}$, $I(\mathsf{p}) \in \mo{\pi}$;
  \item for every $n$-ary function symbol $\mathsf{f}$ that appears in $\mathsf{P}$ and for all
  $\mathsf{t}_1, \ldots \mathsf{t}_n \in U_{\mathsf{P}, \iota}$, $I(\mathsf{f})\ \mathsf{t}_1\ \cdots \mathsf{t}_n = \mathsf{f}\ \mathsf{t}_1\ \cdots\mathsf{t}_n$.
\end{enumerate}
\end{definition}

\begin{definition}
  A Herbrand state $s$ of a program $\mathsf{P}$ is a function that assigns to each argument variable $\mathsf{V}$
  of type $\rho$, an element $s(\mathsf{V}) \in \mo{\rho}$.
\end{definition}
In the following, $s[\mathsf{V}/d]$ is used to denote a state that is identical to $s$ the
only difference being that the new state assigns to $V$ the value $d$.

\begin{definition}
  Let $\mathsf{P}$ be a program, $I$ be a Herbrand interpretation of $\mathsf{P}$ and
  $s$ be a Herbrand state. Then, the semantics of the expressions of $\mathsf{P}$ is
  defined as follows:
\begin{enumerate}
  \item $\mwrs{\mathsf{V}}{I}{s} = s(\mathsf{V})$ if $\mathsf{V}$ is a variable;
  \item $\mwrs{\mathsf{c}}{I}{s} = I(\mathsf{c})$ if $\mathsf{c}$ is an individual constant;
  \item $\mwrs{\mathsf{p}}{I}{s} = I(\mathsf{p})$ if $\mathsf{p}$ is a predicate constant;
  \item $\mwrs{(\mathsf{f}\ \mathsf{E}_1\cdots \mathsf{E}_n)}{I}{s} = I(\mathsf{f})\ \mwrs{\mathsf{E}_1}{I}{s}\cdots \mwrs{\mathsf{E}_n}{I}{s}$;
  \item $\mwrs{(\mathsf{E}_1\ \mathsf{E}_2)}{I}{s} = \mwrs{\mathsf{E}_1}{I}{s}\ \mwrs{\mathsf{E}_2}{I}{s}$;
  \item $\mwrs{(\mathsf{E}_1\approx \mathsf{E}_2)}{I}{s} = true$ if $\mwrs{\mathsf{E}_1}{I}{s}= \mwrs{\mathsf{E}_2}{I}{s}$ and $\mathit{false}$ otherwise.
\end{enumerate}
\end{definition}
For ground expressions $\mathsf{E}$ we will often write $\mwrs{\mathsf{E}}{I}{}$ instead
of $\mwrs{\mathsf{E}}{I}{s}$ since the meaning of $\mathsf{E}$ is independent of $s$.

It is straightforward to confirm that the above definition assigns to every expression an element of the corresponding semantic domain, as stated in the following lemma:
\begin{lemma}
Let $\mathsf{P}$ be a program and let $\mathsf{E} : \rho$ be an expression. Also, let $I$ be a Herbrand interpretation and $s$ be a Herbrand state. Then $\mwrs{\mathsf{E}}{I}{s}\in\mo{\rho}$.
\end{lemma}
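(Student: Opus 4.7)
The plan is to proceed by structural induction on the expression $\mathsf{E}$, following the six clauses of the definition of $\mwrs{\cdot}{I}{s}$. For each syntactic case, I would verify that the semantic value lands in the domain $\mo{\tau}$ dictated by the typing rules, relying at each step on the corresponding clause in the definition of $\mo{\cdot}$ and on the constraints imposed by the Herbrand interpretation.

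The base cases are almost immediate. If $\mathsf{E}$ is an argument variable $\mathsf{V}:\rho$, then $\mwrs{\mathsf{V}}{I}{s} = s(\mathsf{V}) \in \mo{\rho}$ by the definition of a Herbrand state. If $\mathsf{E}$ is an individual constant $\mathsf{c}$, the Herbrand condition $I(\mathsf{c}) = \mathsf{c}$ together with $\mo{\iota} = U_{\mathsf{P},\iota}$ gives the claim. If $\mathsf{E}$ is a predicate constant $\mathsf{p}:\pi$, the Herbrand condition directly says $I(\mathsf{p}) \in \mo{\pi}$.

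For the inductive cases I would handle function application and predicate application separately. In the function case $(\mathsf{f}\ \mathsf{E}_1\cdots\mathsf{E}_n)$, the typing discipline forces each $\mathsf{E}_i : \iota$; by the induction hypothesis each $\mwrs{\mathsf{E}_i}{I}{s} \in U_{\mathsf{P},\iota}$, and the Herbrand clause for function symbols ensures $I(\mathsf{f})\,\mwrs{\mathsf{E}_1}{I}{s}\cdots\mwrs{\mathsf{E}_n}{I}{s}$ is again a ground term in $U_{\mathsf{P},\iota} = \mo{\iota}$. In the predicate application case $(\mathsf{E}_1\ \mathsf{E}_2)$ with $\mathsf{E}_1 : \rho \rightarrow \pi$ and $\mathsf{E}_2 : \rho$, the induction hypothesis gives $\mwrs{\mathsf{E}_1}{I}{s} \in \mo{\rho \rightarrow \pi} = [\mo{\rho}\stackrel{m}{\rightarrow}\mo{\pi}]$ and $\mwrs{\mathsf{E}_2}{I}{s} \in \mo{\rho}$, so applying the monotonic function to its argument yields an element of $\mo{\pi}$. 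Finally, for an equality expression $(\mathsf{E}_1 \approx \mathsf{E}_2)$ the semantic clause returns a truth value, which is exactly an element of $\mo{o}$.

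There is no real obstacle; the lemma is a type-soundness check and the induction is entirely routine, since the definition of $\mo{\cdot}$ has been crafted in lockstep with the syntactic clauses of $\mwrs{\cdot}{I}{s}$. The only point worth stating carefully is that in the predicate application case we do not need to re-prove monotonicity: the induction hypothesis already places $\mwrs{\mathsf{E}_1}{I}{s}$ inside $[\mo{\rho}\stackrel{m}{\rightarrow}\mo{\pi}]$, so monotonicity is inherited, and membership in $\mo{\pi}$ follows simply from the fact that a monotonic function returns values in its codomain.
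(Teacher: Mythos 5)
Your proof is correct and matches the paper's intent: the paper states this lemma without proof, noting only that it is ``straightforward to confirm,'' and the routine structural induction you give---base cases from the Herbrand interpretation and state conditions, application cases from the construction of $\mo{\cdot}$---is exactly the argument being elided.
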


\begin{definition}
Let $\mathsf{P}$ be a program and $M$ be a Herbrand interpretation of $\mathsf{P}$.
Then, $M$ is a Herbrand model of $\mathsf{P}$ iff for every clause
$\mathsf{p}\ \mathsf{V}_1\cdots\mathsf{V}_n \leftarrow \mathsf{E}_1, \ldots \mathsf{E}_m$ in $\mathsf{P}$
and for every Herbrand state $s$, if for all $i \in \{1,\ldots,m\}$, $\mwrs{\mathsf{E}_i}{M}{s} = \mathit{true}$ then
$\mwrs{\mathsf{p}\ \mathsf{V}_1\cdots\mathsf{V}_n}{M}{s} = \mathit{true}$.
\end{definition}
In the following we denote the set of Herbrand interpretations of a program $\mathsf{P}$
with ${\cal I}_\mathsf{P}$. We define a partial order on ${\cal I}_\mathsf{P}$ as
follows: for all $I, J \in {\cal I}_\mathsf{P}$, $I \aleq[{\cal I}_\mathsf{P}] J$
iff for every predicate $\mathsf{p} : \pi$ that appears in $\mathsf{P}$, $I(\mathsf{p}) \aleq[\pi] J(\mathsf{p})$.
Similarly, we denote the set of Herbrand states with ${\cal S}_\mathsf{P}$ and we define
a partial order as follows: for all $s_1, s_2 \in {\cal S}_\mathsf{P}$, $s_1 \aleq[{\cal S}_\mathsf{P}] s_2$ iff
for all variables $\mathsf{V} : \rho$, $s_1(\mathsf{V}) \aleq[\rho] s_2(\mathsf{V})$.
The following lemmata are straightforward to establish:
\begin{lemma}
Let $\mathsf{P}$ be a program. Then, $({\cal I}_\mathsf{P},\aleq[{\cal I}_\mathsf{P}])$ is a complete lattice.
\end{lemma}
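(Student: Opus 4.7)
The plan is to reduce the complete-lattice property of $({\cal I}_\mathsf{P}, \aleq[{\cal I}_\mathsf{P}])$ to the analogous property at the level of each predicate type, and then take suprema ``coordinate-wise'' across the predicate constants occurring in $\mathsf{P}$. A Herbrand interpretation is essentially a tuple whose components are the semantic values assigned to each predicate constant, since the interpretation of individual constants and function symbols is completely fixed by the Herbrand conditions. So if I can show that for every predicate type $\pi$ the poset $(\mo{\pi}, \aleq[\pi])$ is a complete lattice, the result follows by the standard fact that a product of complete lattices (ordered componentwise) is a complete lattice.

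The key intermediate lemma I would establish is: for every predicate type $\pi$, $(\mo{\pi}, \aleq[\pi])$ is a complete lattice. This is proved by induction on the structure of $\pi$. For the base case $\pi = o$, the two-element chain $\{\mathit{false}, \mathit{true}\}$ is trivially a complete lattice. For the inductive step $\pi = \rho \rightarrow \pi'$, I assume that $(\mo{\pi'}, \aleq[\pi'])$ is a complete lattice (note that $\mo{\pi'}$ is either $\{\mathit{false},\mathit{true}\}$ or of the form $\mo{\rho'' \rightarrow \pi''}$, to which the induction hypothesis applies), while $(\mo{\rho}, \aleq[\rho])$ is at least a poset (either the trivial poset on $U_{\mathsf{P},\iota}$ when $\rho = \iota$, or a complete lattice by induction when $\rho$ is predicate). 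Given any family $\{f_i\}_{i \in I} \subseteq \mo{\rho \rightarrow \pi'}$, I would define the pointwise supremum $f^*(d) = \bigsqcup_{i \in I} f_i(d)$ using suprema in $\mo{\pi'}$; the sub-steps are then (a) verify that $f^*$ is itself monotonic from $\mo{\rho}$ to $\mo{\pi'}$, which uses the monotonicity of each $f_i$ together with the fact that taking suprema in a complete lattice preserves order, and (b) verify that $f^*$ is indeed the least upper bound of $\{f_i\}$ in $\aleq[\rho \rightarrow \pi']$, both directly from the definition of the pointwise order.

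With this lemma in hand, the main claim is easy. Given any family $\{I_j\}_{j \in J} \subseteq {\cal I}_\mathsf{P}$, I would define $I^*$ by setting $I^*(\mathsf{c}) = \mathsf{c}$ for each individual constant $\mathsf{c}$ in $\mathsf{P}$, $I^*(\mathsf{f})$ to be the standard Herbrand interpretation for each function symbol $\mathsf{f}$ in $\mathsf{P}$, and $I^*(\mathsf{p}) = \bigsqcup_{j \in J} I_j(\mathsf{p})$ in $(\mo{\pi}, \aleq[\pi])$ for each predicate constant $\mathsf{p} : \pi$ in $\mathsf{P}$. Clearly $I^* \in {\cal I}_\mathsf{P}$ since the Herbrand conditions on constants and function symbols are met and $I^*(\mathsf{p}) \in \mo{\pi}$ by the intermediate lemma. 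That $I^*$ is an upper bound of $\{I_j\}$ and that it is the least such follows directly from the corresponding property of the componentwise supremum, unwinding the definition of $\aleq[{\cal I}_\mathsf{P}]$.

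The main obstacle I expect is the inductive step of the intermediate lemma, specifically verifying that the pointwise supremum of monotonic functions is itself monotonic and that this supremum actually lives in $\mo{\rho \rightarrow \pi'}$. Everything else is either a direct unfolding of definitions or a straightforward appeal to the componentwise nature of $\aleq[{\cal I}_\mathsf{P}]$. Dually, the existence of arbitrary infima (and in particular the top element, the interpretation mapping every predicate to the constantly-true function of the appropriate arity) follows either by the same argument with infima in place of suprema, or for free from the well-known theorem that a poset in which every subset has a supremum automatically has all infima.
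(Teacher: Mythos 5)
Your proposal is correct, and since the paper itself offers no proof of this lemma (it is declared ``straightforward to establish''), your argument supplies exactly the standard reasoning the authors implicitly rely on: an induction on predicate types showing each $(\mo{\pi},\aleq[\pi])$ is a complete lattice under pointwise suprema, followed by the componentwise product over the predicate constants of $\mathsf{P}$. The two points you flag as the real work --- that the pointwise supremum of monotonic functions is again monotonic, hence lands in $\mo{\rho\rightarrow\pi'}$, and that infima come for free once all suprema exist --- are indeed the only nontrivial steps, and you handle both correctly.
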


\begin{lemma} \label{interp-state-monotonicity}
Let $\mathsf{P}$ be a program and let $\mathsf{E} : \rho$ be an expression.
Let $I, J$ be Herbrand interpretations and $s, s'$ be Herbrand states.
Then,
\begin{enumerate}
  \item If $I \aleq[{\cal I}_\mathsf{P}] J$  then $\mwrs{\mathsf{E}}{I}{s} \aleq[\rho] \mwrs{\mathsf{E}}{J}{s}$.
  \item If $s \aleq[{\cal S}_\mathsf{P}] s'$ then $\mwrs{\mathsf{E}}{I}{s} \aleq[\rho] \mwrs{\mathsf{E}}{I}{s'}$.
\end{enumerate}
\end{lemma}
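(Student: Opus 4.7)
The plan is to prove both parts of the lemma simultaneously by structural induction on the expression $\mathsf{E}$, using the grammar of terms and expressions of ${\cal H}$. Since the two statements have essentially parallel inductive structure, I will describe part (1) in detail; part (2) follows by the same template with $s, s'$ playing the role of $I, J$ and the hypothesis $s \aleq[{\cal S}_\mathsf{P}] s'$ replacing $I \aleq[{\cal I}_\mathsf{P}] J$.

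For the base cases, suppose $\mathsf{E}$ is an argument variable $\mathsf{V}$: then $\mwrs{\mathsf{V}}{I}{s}=\mwrs{\mathsf{V}}{J}{s}=s(\mathsf{V})$, so the conclusion holds by reflexivity of $\aleq[\rho]$. If $\mathsf{E}$ is an individual constant, both sides equal the constant itself and the ordering on $\iota$ is trivial. If $\mathsf{E}$ is a predicate constant $\mathsf{p}:\pi$, then $\mwrs{\mathsf{p}}{I}{s}=I(\mathsf{p})$ and $\mwrs{\mathsf{p}}{J}{s}=J(\mathsf{p})$, and $I(\mathsf{p}) \aleq[\pi] J(\mathsf{p})$ is exactly the content of $I \aleq[{\cal I}_\mathsf{P}] J$ at $\mathsf{p}$.

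For the inductive cases, a function application $(\mathsf{f}\ \mathsf{E}_1 \cdots \mathsf{E}_n)$ has type $\iota$. By the induction hypothesis, $\mwrs{\mathsf{E}_i}{I}{s} \aleq[\iota] \mwrs{\mathsf{E}_i}{J}{s}$ for every $i$; since $\aleq[\iota]$ is the identity, these values are in fact equal, and hence $I(\mathsf{f})$ applied to them produces the same element of $U_{\mathsf{P},\iota}$. The equality case $(\mathsf{E}_1 \approx \mathsf{E}_2)$ is handled identically: the induction hypothesis forces equality of the two arguments, so the truth value of the equality atom is the same under $I$ and $J$. The main case, and the one I expect to be the principal obstacle, is predicate application $(\mathsf{E}_1\ \mathsf{E}_2)$ with $\mathsf{E}_1 : \rho \to \pi$ and $\mathsf{E}_2 : \rho$. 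Here the induction hypothesis yields
\[
\mwrs{\mathsf{E}_1}{I}{s} \aleq[\rho\to\pi] \mwrs{\mathsf{E}_1}{J}{s}, \qquad \mwrs{\mathsf{E}_2}{I}{s} \aleq[\rho] \mwrs{\mathsf{E}_2}{J}{s}.
\]
To combine these into $\mwrs{\mathsf{E}_1}{I}{s}(\mwrs{\mathsf{E}_2}{I}{s}) \aleq[\pi] \mwrs{\mathsf{E}_1}{J}{s}(\mwrs{\mathsf{E}_2}{J}{s})$ I will interpose the term $\mwrs{\mathsf{E}_1}{J}{s}(\mwrs{\mathsf{E}_2}{I}{s})$: the first inequality is pointwise, giving $\mwrs{\mathsf{E}_1}{I}{s}(\mwrs{\mathsf{E}_2}{I}{s}) \aleq[\pi] \mwrs{\mathsf{E}_1}{J}{s}(\mwrs{\mathsf{E}_2}{I}{s})$; then, crucially, because $\mwrs{\mathsf{E}_1}{J}{s}$ lies in $\mo{\rho\to\pi}=[\mo{\rho}\stackrel{m}{\to}\mo{\pi}]$ and is therefore a \emph{monotonic} map (this is where definition of $\mo{\rho\to\pi}$ as monotonic functions is indispensable), applying it to the second inequality yields $\mwrs{\mathsf{E}_1}{J}{s}(\mwrs{\mathsf{E}_2}{I}{s}) \aleq[\pi] \mwrs{\mathsf{E}_1}{J}{s}(\mwrs{\mathsf{E}_2}{J}{s})$. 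Transitivity of $\aleq[\pi]$ concludes the case.

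Part (2) proceeds by the same induction; the only notable difference occurs in the variable case, where now $\mwrs{\mathsf{V}}{I}{s}=s(\mathsf{V}) \aleq[\rho] s'(\mathsf{V}) = \mwrs{\mathsf{V}}{I}{s'}$ is obtained directly from $s \aleq[{\cal S}_\mathsf{P}] s'$, while the constant and predicate-constant cases give equality because the denotation is independent of the state. The application case is again the critical one and is handled by exactly the same interposition argument, again relying on the monotonicity built into $\mo{\rho\to\pi}$. The real hurdle throughout is this latter point: without the semantic restriction that predicate types denote only monotonic maps, the inductive step for application would fail, which is precisely why Wadge's semantics bakes monotonicity into the type hierarchy.
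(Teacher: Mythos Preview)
Your structural induction is correct and is exactly the argument one would expect; the paper itself omits the proof entirely, calling the lemma ``straightforward to establish,'' and your write-up is the natural way to flesh this out. In particular, your treatment of the application case via the interposed term and the appeal to monotonicity of elements of $\mo{\rho\to\pi}$ is the standard (and essentially only) way to close that step.
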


We can now define the {\em immediate consequence operator} for
${\cal H}$ programs, which generalizes the corresponding operator for classical (first-order)
programs~\cite{lloyd}.
\begin{definition}
  Let $\mathsf{P}$ be a program. The mapping $T_\mathsf{P} : {\cal I}_\mathsf{P} \rightarrow {\cal I}_\mathsf{P}$
  is called the {\em immediate consequence operator for $\mathsf{P}$} and is defined for every predicate
  $\mathsf{p} : \rho_1 \rightarrow \cdots \rightarrow \rho_n \rightarrow o$ and $d_i \in \mo{\rho_i}$ as
\[T_\mathsf{P}(I)(\mathsf{p})\ d_1\cdots d_n =
\begin{cases}
  \mathit{true}  & \mbox{there exists a clause $\mathsf{p}\ \mathsf{V}_1\cdots\mathsf{V}_n \leftarrow \mathsf{E}_1, \ldots \mathsf{E}_m$ such that}\\
                 & \mbox{for every state $s$,  $\mwrs{\mathsf{E}_i}{I}{s[\mathsf{V}_1/d_1, \ldots, \mathsf{V}_n/d_n]} = \mathit{true}$ for all $i \in \{1, \ldots, m\}$} \\
  \mathit{false} & \mbox{otherwise.}
\end{cases}
\]
\end{definition}

%
%

It is not hard to see that $T_\mathsf{P}$ is a monotonic function, and this leads to
the following theorem~\cite{Wa91a,KRW05}:
\begin{theorem}
Let $\mathsf{P}$ be a program.
Then $M_{\mathsf{P}} =\mathit{lfp}(T_\mathsf{P})$
is the minimum, with respect to $\aleq[{\cal I}_\mathsf{P}]$, Herbrand model of $\mathsf{P}$.
\end{theorem}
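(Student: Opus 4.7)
The plan is to apply the Knaster-Tarski theorem to $T_\mathsf{P}$ on the complete lattice $({\cal I}_\mathsf{P}, \aleq[{\cal I}_\mathsf{P}])$ and then identify the resulting least fixed point with the minimum Herbrand model, following the usual first-order template from Lloyd's book. What distinguishes the higher-order case, and what I expect to be the main obstacle, is checking that $T_\mathsf{P}$ is well-defined, i.e., that $T_\mathsf{P}(I) \in {\cal I}_\mathsf{P}$ whenever $I \in {\cal I}_\mathsf{P}$. The nontrivial content of this is that for every predicate constant $\mathsf{p} : \rho_1 \rightarrow \cdots \rightarrow \rho_n \rightarrow o$ of $\mathsf{P}$, the value $T_\mathsf{P}(I)(\mathsf{p})$ is a monotonic function, i.e., it lies in $\mo{\pi}$. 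I would pick $d_i \aleq[\rho_i] d_i'$, assume $T_\mathsf{P}(I)(\mathsf{p})\,d_1\cdots d_n = \mathit{true}$ via some clause $\mathsf{p}\,\mathsf{V}_1\cdots\mathsf{V}_n \leftarrow \mathsf{E}_1,\ldots,\mathsf{E}_m$, observe that $s[\mathsf{V}_1/d_1,\ldots,\mathsf{V}_n/d_n] \aleq[{\cal S}_\mathsf{P}] s[\mathsf{V}_1/d_1',\ldots,\mathsf{V}_n/d_n']$ for every $s$, and invoke Lemma \ref{interp-state-monotonicity}(2) to transport truth of each body expression to the primed state; the same clause then witnesses $T_\mathsf{P}(I)(\mathsf{p})\,d_1'\cdots d_n' = \mathit{true}$. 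The definitional restriction that predicate variables in the body must come from the head is exactly what makes this argument go through, since after the substitution the only remaining freedom in the state concerns individual variables.

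Next I would establish monotonicity of $T_\mathsf{P}$ itself. If $I \aleq[{\cal I}_\mathsf{P}] J$ and $T_\mathsf{P}(I)(\mathsf{p})\,d_1\cdots d_n = \mathit{true}$ via some clause, Lemma \ref{interp-state-monotonicity}(1) lifts truth of each body expression from $I$ to $J$ at every state of the form $s[\mathsf{V}_i/d_i]$, so the same clause witnesses truth under $T_\mathsf{P}(J)$. Combined with the lemma that $({\cal I}_\mathsf{P}, \aleq[{\cal I}_\mathsf{P}])$ is a complete lattice, Knaster-Tarski delivers the least fixed point $M_\mathsf{P} = \mathit{lfp}(T_\mathsf{P})$.

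It remains to match $M_\mathsf{P}$ with the minimum Herbrand model. For the model direction, I would fix any clause $\mathsf{p}\,\mathsf{V}_1\cdots\mathsf{V}_n \leftarrow \mathsf{E}_1,\ldots,\mathsf{E}_m$ and any state $s$ under which every body expression evaluates to $\mathit{true}$ in $M_\mathsf{P}$, set $d_i = s(\mathsf{V}_i)$, and unfold the definition of $T_\mathsf{P}$ to conclude $T_\mathsf{P}(M_\mathsf{P})(\mathsf{p})\,d_1\cdots d_n = \mathit{true}$; using the fixed-point equation $T_\mathsf{P}(M_\mathsf{P}) = M_\mathsf{P}$, the head is $\mathit{true}$ under $M_\mathsf{P}$ and $s$. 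For minimality, I would show that every Herbrand model $N$ is a pre-fixed point of $T_\mathsf{P}$: if $T_\mathsf{P}(N)(\mathsf{p})\,d_1\cdots d_n = \mathit{true}$ via a clause, then that clause's body is $\mathit{true}$ under $N$ at the state $s[\mathsf{V}_i/d_i]$ for every $s$, and the model condition forces the head to be $\mathit{true}$, yielding $N(\mathsf{p})\,d_1\cdots d_n = \mathit{true}$. Hence $T_\mathsf{P}(N) \aleq[{\cal I}_\mathsf{P}] N$, and Tarski gives $M_\mathsf{P} \aleq[{\cal I}_\mathsf{P}] N$, completing the proof.
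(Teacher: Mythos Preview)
The paper itself does not prove this theorem; it only remarks that $T_\mathsf{P}$ is monotonic and cites \cite{Wa91a,KRW05}. Your outline is precisely the standard Knaster--Tarski template those references follow, and your treatment of well-definedness of $T_\mathsf{P}(I)$, monotonicity of $T_\mathsf{P}$, and the pre-fixed-point direction (every Herbrand model $N$ satisfies $T_\mathsf{P}(N)\aleq[{\cal I}_\mathsf{P}] N$, hence $M_\mathsf{P}\aleq[{\cal I}_\mathsf{P}] N$) is correct.

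There is, however, a real gap in your ``model direction'' as argued against the $T_\mathsf{P}$ printed in this paper. You fix a clause and a single state $s$ at which the body holds, set $d_i=s(\mathsf{V}_i)$, and then ``unfold the definition of $T_\mathsf{P}$ to conclude $T_\mathsf{P}(M_\mathsf{P})(\mathsf{p})\,d_1\cdots d_n=\mathit{true}$''. But the paper's $T_\mathsf{P}$ requires the body to be true at $s'[\mathsf{V}_1/d_1,\ldots,\mathsf{V}_n/d_n]$ for \emph{every} state $s'$, not merely at the one $s$ you started from. Because a definitional clause may carry extra individual variables in its body beyond $\mathsf{V}_1,\ldots,\mathsf{V}_n$, varying $s'$ changes their values and your transfer does not go through. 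Concretely, for the program consisting of the single clause $\mathsf{p}\leftarrow(\mathsf{X}\approx\mathsf{a})$ over a universe with constants $\mathsf{a},\mathsf{b}$, every Herbrand model must set $M(\mathsf{p})=\mathit{true}$ (take any $s$ with $s(\mathsf{X})=\mathsf{a}$), yet $T_\mathsf{P}(I)(\mathsf{p})=\mathit{false}$ for all $I$, since no state with $s'(\mathsf{X})=\mathsf{b}$ satisfies the body. Thus, with the operator exactly as written here, $\mathit{lfp}(T_\mathsf{P})$ is not a model and your unfolding step fails. The intended reading (and the one used in the cited sources) quantifies the residual individual variables existentially; under that reading your argument is correct verbatim, but you should flag the quantifier issue rather than silently ``unfold''.
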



\subsection{Bezem's Semantics}
\label{sec:bezemsem}
In contrast to Wadge's semantics which proceeds by constructing the meaning of predicates
as relations, Bezem's approach takes a (seemingly) more syntax-oriented approach. In particular,
Bezem's approach builds on the ground instantiation of the source program in order to
retrieve the meaning of the program. In our definitions below, we follow relatively closely
the exposition given in~\cite{Bezem99,Bezem01,Bezem2002}.
\begin{definition}
Let $\mathsf{P}$ be a program and let $\mathsf{Gr(P)}$ be its ground instantiation.
An interpretation $I$ for $\mathsf{Gr(P)}$ is defined as a subset of $U_{\mathsf{P},o}$
by the usual convention that, for any $\mathsf{A}\in U_{\mathsf{P},o}$, $I(\mathsf{A})=true$
iff $\mathsf{A} \in I$. We also extend the interpretation $I$ for every $(\mathsf{E}_1 \approx \mathsf{E}_2)$
atom as follows: $I(\mathsf{E_1}\approx\mathsf{E_2}) = \mathit{true}$ if $\mathsf{E_1} = \mathsf{E_2}$ and $\mathit{false}$ otherwise.
\end{definition}

Observe that the meaning of $(\mathsf{E}_1 \approx \mathsf{E}_2)$ is fixed and independent of the interpretation.

\begin{definition}
We define the immediate consequence operator, $\bezemTP$, of $\mathsf{P}$ as follows:
\begin{align*}
\bezemTP(I)(\mathsf{A}) = \begin{cases}
 \mathit{true} & \mbox{if there exists a clause } \mathsf{A} \leftarrow \mathsf{E}_1, \ldots, \mathsf{E}_m \mbox{ in } \mathsf{Gr(P)} \\
 & \mbox{ such that } I(\mathsf{E}_i) =true \mbox{ for all } i \in \{1, \ldots, m\} \\
 \mathit{false} & \mbox{otherwise}.
\end{cases}
\end{align*}
\end{definition}

As it is well established in bibliography (for example~\cite{lloyd}), the least fixed-point of the immediate consequence operator of a propositional program exists and is the minimum, with respect to
set inclusion and equivalently $\leq$, model of $\mathsf{Gr(P)}$. This fixed-point, which we will henceforth denote by $\bezem$, is shown in~\cite{Bezem99,Bezem01} to be directly related to a notion of a model capable of capturing the perceived semantics of the higher-order program $\mathsf{P}$. In particular, this model by definition assigns to all ground atoms the same truth values as $\bezem$. It is therefore justified that we restrict our attention to $\bezem$, instead of the aforementioned higher-order model, in our attempt to prove the equivalence of Bezem's semantics and Wadge's semantics.


The following definition and subsequent theorem obtained in~\cite{Bezem2002},
identify a property of $\bezem$ that we will need in the next section.
%
\begin{definition}
Let $\mathsf{P}$ be a program and let $\bezem$ be the $\leq$-minimum model of $\mathsf{Gr(P)}$.
For every argument type $\rho$ we define a corresponding partial order as follows: for type $\iota$, we define
$\preceq_\iota$ as syntactical equality, i.e. $\mathsf{E} \preceq_\iota \mathsf{E}$ for all $\mathsf{E} \in U_{\mathsf{P},\iota}$.
For type $o$, $\mathsf{E} \preceq_o \mathsf{E'}$ iff $\bezem(\mathsf{E}) \leq \bezem(\mathsf{E'})$. For a predicate type of
the form $\rho \rightarrow \pi$, $\mathsf{E} \preceq_{\rho \rightarrow \pi} \mathsf{E'}$ iff
$\mathsf{E} \, \mathsf{D} \preceq_{\pi} \mathsf{E'} \, \mathsf{D}$ for all $\mathsf{D} \in U_{\mathsf{P},\rho}$.
\end{definition}

%
%
\begin{theorem}[$\preceq$-Monotonicity Property]\label{th-preceq-monotonicity}~\cite{Bezem2002}
Let $\mathsf{P}$ be a program and $\bezem$ be the $\leq$-minimum model of $\mathsf{Gr(P)}$. Then for all $\mathsf{E} \in U_{\mathsf{P},\rho\rightarrow\pi}$ and all $\mathsf{D}, \mathsf{D}' \in U_{\mathsf{P},\rho}$ such that $\mathsf{D} \preceq_\rho \mathsf{D}'$, it holds $\mathsf{E} \, \mathsf{D} \preceq_{\pi} \mathsf{E} \, \mathsf{D}'$.
\end{theorem}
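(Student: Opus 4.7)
The statement is a congruence property of the least Herbrand model of $\mathsf{Gr(P)}$: applying any ground term $\mathsf{E}$ to $\preceq$-related arguments yields $\preceq$-related results. My plan is to first unfold the goal using the recursive definition of $\preceq_\pi$, reducing it to a concrete claim about ground atoms in $\bezem$. Writing $\pi = \rho_1 \to \cdots \to \rho_l \to o$, the desired inequality $\mathsf{E}\,\mathsf{D} \preceq_\pi \mathsf{E}\,\mathsf{D}'$ becomes the statement that whenever $\mathsf{E}\,\mathsf{D}\,\mathsf{C}_1\cdots\mathsf{C}_l \in \bezem$, also $\mathsf{E}\,\mathsf{D}'\,\mathsf{C}_1\cdots\mathsf{C}_l \in \bezem$. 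In canonical form, such a ground atom reads $\mathsf{p}\,\mathsf{S}_1\cdots\mathsf{S}_n$ for some predicate constant $\mathsf{p}$, and replacing $\mathsf{D}$ by $\mathsf{D}'$ changes one argument $\mathsf{S}_j$ to a $\preceq$-larger $\mathsf{S}_j'$. I would therefore prove the symmetric generalization: if $\mathsf{p}\,\mathsf{S}_1\cdots\mathsf{S}_n \in \bezem$ and $\mathsf{S}_i \preceq_{\rho_i} \mathsf{S}_i'$ for every $i$, then $\mathsf{p}\,\mathsf{S}_1'\cdots\mathsf{S}_n' \in \bezem$; the theorem then follows by specializing $\mathsf{S}_i' = \mathsf{S}_i$ at every position except the one holding $\mathsf{D}$.

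To prove this generalization I would induct on the least $k$ with $\mathsf{p}\,\mathsf{S}_1\cdots\mathsf{S}_n \in T_k$, where $T_k$ denotes the $k$-th iterate of $\bezemTP$ from $\emptyset$. The base $k=0$ is vacuous. For the inductive step, the atom arises from some definitional clause $\mathsf{p}\,\mathsf{V}_1\cdots\mathsf{V}_n \leftarrow \mathsf{D}_1^\circ, \ldots, \mathsf{D}_m^\circ$ of $\mathsf{P}$ under a ground substitution $\tau$ with $\tau(\mathsf{V}_i) = \mathsf{S}_i$ and every $\mathsf{D}_j^\circ\tau \in T_{k-1}$. I would form $\tau'$ that coincides with $\tau$ on body individual variables (consistent because $\preceq_\iota$ is equality) and sends $\mathsf{V}_i$ to $\mathsf{S}_i'$. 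The lifted clause $\mathsf{p}\,\mathsf{S}_1'\cdots\mathsf{S}_n' \leftarrow \mathsf{D}_1^\circ\tau', \ldots, \mathsf{D}_m^\circ\tau'$ also lies in $\mathsf{Gr(P)}$, so it suffices to verify $\mathsf{D}_j^\circ\tau' \in \bezem$ for every $j$. By the definitional restrictions, each $\mathsf{D}_j^\circ$ is either an equality expression of type $\iota$ (trivial, since $\tau$ and $\tau'$ agree on $\iota$-typed variables), an atom headed by a predicate constant $\mathsf{q}$, or an atom of the form $\mathsf{V}_h\,\mathsf{F}_1\cdots\mathsf{F}_q$ whose head is a formal parameter.

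The core difficulty is in the two application cases, which both require transferring membership in $\bezem$ from $\mathsf{D}_j^\circ\tau \in T_{k-1}$ to $\mathsf{D}_j^\circ\tau'$ by replacing each $\mathsf{F}_i\tau$ with $\mathsf{F}_i\tau'$ inside an atom at strictly lower Kleene stage. To carry out this replacement I would prove an auxiliary substitution compatibility lemma, namely $\mathsf{F}\tau \preceq \mathsf{F}\tau'$ for every subexpression $\mathsf{F}$ of the clause body, by structural induction on $\mathsf{F}$. The base cases (constants, individual variables, and formal parameters) are immediate, but the application case $\mathsf{F} = \mathsf{F}_1\mathsf{F}_2$ is the genuinely delicate point: via $\preceq$'s pointwise definition it reduces to an instance of the very congruence we are proving. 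I would break this apparent circularity by threading the compatibility lemma as a simultaneous invariant of the principal $k$-induction, so that each instance of congruence needed for a term of the form $\mathsf{F}_1\tau'$ is discharged using the outer hypothesis applied to atoms of strictly smaller derivation depth. Once compatibility is available, the predicate-constant-head case reduces directly to the outer hypothesis at stage $k-1$, while the formal-parameter-head case combines compatibility with the assumption $\mathsf{S}_h \preceq_{\rho_h} \mathsf{S}_h'$ — the latter transferring truth from $\mathsf{S}_h\,(\mathsf{F}_1\tau')\cdots(\mathsf{F}_q\tau')$ to $\mathsf{S}_h'\,(\mathsf{F}_1\tau')\cdots(\mathsf{F}_q\tau')$ by the very definition of $\preceq$ at predicate type.
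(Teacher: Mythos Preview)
The paper does not give its own proof of this theorem: it is stated with a citation to \cite{Bezem2002} and used as a black box in the subsequent development. So there is no in-paper argument to compare against, and your proposal should be judged on its own merits.

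Your overall shape --- unfold $\preceq_\pi$ to type $o$, generalize to a simultaneous replacement of all arguments by $\preceq$-larger ones, and induct on the Kleene stage $k$ at which the atom first enters $T_k$ --- is the standard line of attack and matches how Bezem proceeds. The difficulty you correctly isolate is the application case of the compatibility lemma: to apply the stage-$(k{-}1)$ hypothesis to a body atom $\mathsf{q}\,(\mathsf{F}_1\tau)\cdots(\mathsf{F}_r\tau)\in T_{k-1}$ you must know $\mathsf{F}_i\tau \preceq_{\rho_i} \mathsf{F}_i\tau'$, and when $\mathsf{F}_i = \mathsf{G}_1\,\mathsf{G}_2$ this needs $\mathsf{G}_1\tau'\,(\mathsf{G}_2\tau) \preceq \mathsf{G}_1\tau'\,(\mathsf{G}_2\tau')$, an instance of the very congruence being proved.

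The gap is that your proposed resolution --- ``discharge each such instance using the outer hypothesis applied to atoms of strictly smaller derivation depth'' --- does not go through as stated. The relation $\preceq$ at predicate type is defined in terms of the \emph{full} model $\mathcal{M}_{\mathsf{Gr(P)}}$: to establish $\mathsf{G}_1\tau'\,(\mathsf{G}_2\tau) \preceq_\pi \mathsf{G}_1\tau'\,(\mathsf{G}_2\tau')$ you must show, for \emph{every} completion $\mathsf{C}_1,\ldots,\mathsf{C}_l$, that membership of $\mathsf{G}_1\tau'\,(\mathsf{G}_2\tau)\,\mathsf{C}_1\cdots\mathsf{C}_l$ in $\mathcal{M}_{\mathsf{Gr(P)}}$ implies the same for $\mathsf{G}_1\tau'\,(\mathsf{G}_2\tau')\,\mathsf{C}_1\cdots\mathsf{C}_l$. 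There is no reason the first of these atoms should enter at a stage ${<}\,k$; it may enter arbitrarily late. Hence the outer induction hypothesis, which only covers atoms in $T_{k-1}$, does not supply what you need, and ``threading the compatibility lemma as a simultaneous invariant'' does not help because the lemma in the form $\mathsf{F}\tau \preceq \mathsf{F}\tau'$ is not a stage-indexed statement at all.

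To close the argument one has to either (i) relativize the preorder to the approximants and show that the immediate-consequence operator preserves the monotonicity property in a way compatible with how the staged preorders evolve, or (ii) strengthen the induction hypothesis so that it speaks directly about body \emph{expressions} under related substitutions at type $o$ (not about the derived $\preceq$ relation at higher types), so that every appeal to congruence is for a concrete atom already known to lie in $T_{k-1}$. Bezem's published proof follows a route of this kind; your sketch identifies the right obstacle but does not yet supply the mechanism that actually breaks the circularity.
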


\section{Equivalence of the two Semantics}
\label{sec:positive}
\label{section4}
{In this section we demonstrate that the two semantics presented in the previous
section, are equivalent for definitional programs. To help us transcend the
differences between these approaches, we introduce two key notions, namely that
of the \emph{ground restriction} of a higher-order interpretation and its
complementary notion of the \emph{semantic extension} of ground expressions.
But first we present the following \emph{Substitution Lemma}, which will be useful in the proofs of later results.}
\begin{lemma}[Substitution Lemma]
\label{substitution_lemma}
Let $\mathsf{P}$ be a program and $I$ be a Herbrand interpretation of $\mathsf{P}$. Also
let $\mathsf{E}$ be an expression and $\theta$ be a ground
substitution with $vars(\mathsf{E}) \subseteq dom(\theta)$. If $s$ is a Herbrand state such that, for all
$\mathsf{V} \in vars(\mathsf{E})$, $s(\mathsf{V}) = \mwrt{\theta(\mathsf{V})}{I}$, then
$\mwrs{\mathsf{E}}{I}{s} = \mwrt{\mathsf{E}\theta}{I}$.
\end{lemma}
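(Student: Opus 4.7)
The plan is to prove the Substitution Lemma by straightforward structural induction on the expression $\mathsf{E}$. Since the hypothesis only constrains $s$ on $vars(\mathsf{E})$, I first observe that whenever $\mathsf{E}'$ is a subexpression of $\mathsf{E}$, we have $vars(\mathsf{E}') \subseteq vars(\mathsf{E}) \subseteq dom(\theta)$ and $s(\mathsf{V}) = \mwrt{\theta(\mathsf{V})}{I}$ for all $\mathsf{V} \in vars(\mathsf{E}')$, so the inductive hypothesis applies unchanged to every subexpression.

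For the base cases, if $\mathsf{E}$ is an individual or predicate constant, then $\mathsf{E}\theta = \mathsf{E}$ and both sides reduce to $I(\mathsf{E})$ by the semantic clauses for constants (and in the case of an individual constant $\mathsf{c}$, $I(\mathsf{c}) = \mathsf{c}$ by the Herbrand interpretation condition). If $\mathsf{E}$ is an argument variable $\mathsf{V}$, then $\mwrs{\mathsf{V}}{I}{s} = s(\mathsf{V})$, and by the hypothesis on $s$ this equals $\mwrt{\theta(\mathsf{V})}{I} = \mwrt{\mathsf{V}\theta}{I}$.

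For the inductive step I consider each compound form in turn. For a function application $(\mathsf{f}\ \mathsf{E}_1\cdots\mathsf{E}_n)$, I unfold the definition of $\llbracket\cdot\rrbracket$, apply the inductive hypothesis to each $\mathsf{E}_i$, and repackage using the semantic clause for $\mathsf{f}$, noting that $(\mathsf{f}\ \mathsf{E}_1\cdots\mathsf{E}_n)\theta = (\mathsf{f}\ \mathsf{E}_1\theta\cdots\mathsf{E}_n\theta)$. For a predicate application $(\mathsf{E}_1\ \mathsf{E}_2)$, the same pattern yields
\[
\mwrs{(\mathsf{E}_1\ \mathsf{E}_2)}{I}{s} = \mwrs{\mathsf{E}_1}{I}{s}\ \mwrs{\mathsf{E}_2}{I}{s} = \mwrt{\mathsf{E}_1\theta}{I}\ \mwrt{\mathsf{E}_2\theta}{I} = \mwrt{(\mathsf{E}_1\ \mathsf{E}_2)\theta}{I}.
\]
For an equality expression $(\mathsf{E}_1 \approx \mathsf{E}_2)$, the inductive hypothesis gives $\mwrs{\mathsf{E}_i}{I}{s} = \mwrt{\mathsf{E}_i\theta}{I}$ for $i = 1, 2$, so the two sides agree on whether the equality clause returns $\mathit{true}$ or $\mathit{false}$.

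There is no genuine obstacle here: the lemma is a textbook compositionality statement for the denotation function, and the only point requiring care is the bookkeeping that the induction hypothesis transfers to subexpressions, which is ensured by $vars(\mathsf{E}_i) \subseteq vars(\mathsf{E})$. The proof is therefore completely mechanical once the case split on the syntactic form of $\mathsf{E}$ is laid out.
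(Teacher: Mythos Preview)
Your proof is correct and follows essentially the same approach as the paper: a structural induction on $\mathsf{E}$ with the same case split (constants, variables, function application, predicate application, equality), unfolding the semantic clauses and invoking the induction hypothesis on subexpressions. The only difference is that you make explicit the bookkeeping that $vars(\mathsf{E}_i) \subseteq vars(\mathsf{E})$ ensures the hypothesis transfers to subexpressions, which the paper leaves implicit.
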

\begin{proof}
By a structural induction on $\mathsf{E}$. For the basis case, if
$\mathsf{E}=\mathsf{p}$ or $\mathsf{E}=\mathsf{c}$ then
the statement reduces to an identity and if $\mathsf{E}=\mathsf{V}$
then it holds by assumption. For the induction step, we first examine the case that $\mathsf{E}= (\mathsf{f}\ \mathsf{E}_1\ \cdots\ \mathsf{E}_n)$;
then $\mwrs{\mathsf{E}}{I}{s} = I(\mathsf{f})\  \mwrs{\mathsf{E}_1}{I}{s} \ \cdots\  \mwrs{\mathsf{E}_n}{I}{s}$
and $\mwrt{\mathsf{E}\theta}{I} = I(\mathsf{f})\ \mwrt{\mathsf{E}_1\theta}{I}\ \cdots\  \mwrt{\mathsf{E}_n\theta}{I}$.
By the induction hypothesis, $\mwrs{\mathsf{E}_1}{I}{s} =
\mwrt{\mathsf{E}_1\theta}{I}, \ldots, \mwrs{\mathsf{E}_n}{I}{s} =
\mwrt{\mathsf{E}_n\theta}{I}$, thus we have
$\mwrs{\mathsf{E}}{I}{s} = \mwrt{\mathsf{E}\theta}{I}$.
Now consider the case that $\mathsf{E}= \mathsf{E}_1\,\mathsf{E}_2$.
We have $\mwrs{\mathsf{E}}{I}{s} = \mwrs{\mathsf{E}_1}{I}{s}\mwrs{\mathsf{E}_2}{I}{s}$
and $\mwrt{\mathsf{E}\theta}{I} = \mwrt{\mathsf{E}_1\theta}{I}\mwrt{\mathsf{E}_2\theta}{I}$.
Again, applying the induction hypothesis, we conclude that
$\mwrs{\mathsf{E}}{I}{s} = \mwrt{\mathsf{E}\theta}{I}$.
Finally, if $\mathsf{E} = (\mathsf{E}_1\approx \mathsf{E}_2)$
we have that $\mwrs{\mathsf{E}}{I}{s}=true$ iff $\mwrs{\mathsf{E}_1}{I}{s}=\mwrs{\mathsf{E}_2}{I}{s}$,
which, by the induction hypothesis, holds iff
$\mwrt{\mathsf{E}_1\theta}{I}=\mwrt{\mathsf{E}_2\theta}{I}$.
Moreover, we have $\mwrt{\mathsf{E}\theta}{I}=true$ iff
$\mwrt{\mathsf{E}_1\theta}{I}=\mwrt{\mathsf{E}_2\theta}{I}$,
therefore we conclude that $\mwrs{\mathsf{E}}{I}{s}=true$ iff
$\mwrt{\mathsf{E}\theta}{I}=true$.
\end{proof}

{Given a Herbrand interpretation $I$ of a definitional program, it is straightforward to
devise a corresponding interpretation of the ground instantiation of the program,
by restricting $I$ to only assigning truth values to ground atoms. As expected,
such a restriction of a model of the program produces a model of its ground
instantiation. This idea is formalized in the following definition and theorem.}
\begin{definition}
Let $\mathsf{P}$ be a program, $I$ be a Herbrand interpretation of $\mathsf{P}$ and $\mathsf{Gr(P)}$
be the ground instantiation of $\mathsf{P}$. We define the \emph{ground restriction}
of $I$, which we denote by $I|_{\mathsf{Gr(P)}}$, to be an interpretation of
$\mathsf{Gr(P)}$, such that, for every ground atom
$\mathsf{A}$, $I|_{\mathsf{Gr(P)}}(\mathsf{A}) = \mwrt{\mathsf{A}}{I}$.
\end{definition}

\begin{theorem}
\label{th_ground_model}
Let $\mathsf{P}$ be a program and $\mathsf{Gr(P)}$ be its ground instantiation.
Also let $M$ be a Herbrand model of $\mathsf{P}$ and $M|_{\mathsf{Gr(P)}}$ be the ground restriction of $M$. Then $M|_\mathsf{Gr(P)}$ is a model of $\mathsf{Gr(P)}$.
\end{theorem}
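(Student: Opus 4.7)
The plan is to unfold the definition of ``model of $\mathsf{Gr(P)}$'' and reduce it to the fact that $M$ is a model of $\mathsf{P}$, using the Substitution Lemma as the bridge between the ground syntactic world (interpretations of $\mathsf{Gr(P)}$) and the higher-order semantic world (interpretations of $\mathsf{P}$ together with states).

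First I would fix an arbitrary ground clause in $\mathsf{Gr(P)}$. By definition, this clause has the form $(\mathsf{p}\ \mathsf{V}_1\cdots\mathsf{V}_n)\theta \leftarrow \mathsf{E}_1\theta,\ldots,\mathsf{E}_m\theta$, obtained from some clause $\mathsf{p}\ \mathsf{V}_1\cdots\mathsf{V}_n \leftarrow \mathsf{E}_1,\ldots,\mathsf{E}_m$ of $\mathsf{P}$ via a ground substitution $\theta$ whose domain contains all variables appearing in that clause. I would then assume the body hypothesis for the $\mathsf{Gr(P)}$-interpretation $M|_{\mathsf{Gr(P)}}$, namely $M|_{\mathsf{Gr(P)}}(\mathsf{E}_i\theta)=\mathit{true}$ for every $i$. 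Unfolding the definition of ground restriction, this is the same as $\mwrt{\mathsf{E}_i\theta}{M}=\mathit{true}$ for every $i$.

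Next I would choose a specific Herbrand state $s$ for $\mathsf{P}$ by setting $s(\mathsf{V})=\mwrt{\theta(\mathsf{V})}{M}$ for every $\mathsf{V}\in dom(\theta)$, and defining $s$ arbitrarily on the remaining variables. This state is well-typed because $\theta(\mathsf{V})$ is a ground term of the same type as $\mathsf{V}$, so by the earlier lemma on the types of semantic values, $\mwrt{\theta(\mathsf{V})}{M}$ lies in the correct semantic domain. With this state in hand, the Substitution Lemma applied to each body expression $\mathsf{E}_i$ yields $\mwrs{\mathsf{E}_i}{M}{s}=\mwrt{\mathsf{E}_i\theta}{M}=\mathit{true}$, and applied to the head term $\mathsf{p}\ \mathsf{V}_1\cdots\mathsf{V}_n$ yields $\mwrs{\mathsf{p}\ \mathsf{V}_1\cdots\mathsf{V}_n}{M}{s}=\mwrt{(\mathsf{p}\ \mathsf{V}_1\cdots\mathsf{V}_n)\theta}{M}$.

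Finally I would invoke the assumption that $M$ is a Herbrand model of $\mathsf{P}$: since $\mwrs{\mathsf{E}_i}{M}{s}=\mathit{true}$ for all $i$, the model condition forces $\mwrs{\mathsf{p}\ \mathsf{V}_1\cdots\mathsf{V}_n}{M}{s}=\mathit{true}$, so by the displayed equality above $\mwrt{(\mathsf{p}\ \mathsf{V}_1\cdots\mathsf{V}_n)\theta}{M}=\mathit{true}$, which by definition of $M|_{\mathsf{Gr(P)}}$ is exactly $M|_{\mathsf{Gr(P)}}((\mathsf{p}\ \mathsf{V}_1\cdots\mathsf{V}_n)\theta)=\mathit{true}$. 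There is no real obstacle here; the only step that needs any care is checking that the state $s$ is well-defined and has the right types so that the Substitution Lemma can legitimately be applied, and that the definition of a Herbrand model of $\mathsf{P}$ (which quantifies over all states) covers the particular state we constructed.
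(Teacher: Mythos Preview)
Your proposal is correct and follows essentially the same approach as the paper: fix a ground instance of a clause, build the state $s$ by $s(\mathsf{V})=\mwrt{\theta(\mathsf{V})}{M}$, apply the Substitution Lemma to both head and body, and then invoke the model condition for $M$ at this particular state. The only cosmetic difference is that the paper phrases the argument as a chain of equalities for head and body simultaneously, whereas you present it in the assume-body/conclude-head style; your additional remark about well-typedness of $s$ via Lemma~1 is a nice explicit check that the paper leaves implicit.
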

\begin{proof}
By definition, each clause in $\mathsf{Gr(P)}$ is of the form
$\mathsf{p} \, \mathsf{E}_1 \, \cdots \, \mathsf{E}_n \leftarrow \mathsf{B}_1\theta, \ldots, \mathsf{B}_k\theta$, i.e. the ground instantiation of a clause $\mathsf{p \, V_1 \, \cdots \, V_n} \leftarrow \mathsf{B}_1, \ldots, \mathsf{B}_k$ in $\mathsf{P}$ with respect to a ground substitution $\theta$, such that $dom(\theta)$ includes $V_1,\ldots,V_n$ and all other (individual) variables appearing in the body of the clause and $\theta(\mathsf{V}_i) = \mathsf{E}_i$, for all $i \in \{1, \ldots, n\}$.
Let $s$ be a Herbrand state such that $s(\mathsf{V}) =\mwrt{\theta(\mathsf{V})}{M}$,
for all $V \in dom(\theta)$. By the Substitution Lemma (Lemma \ref{substitution_lemma})
and the definition of $M|_\mathsf{Gr(P)}$,
$\mwrs{\mathsf{p}\, \mathsf{V}_1\, \cdots\, \mathsf{V}_n}{M}{s} =
 \mwrt{\mathsf{p}\, \mathsf{E}_1\, \cdots\, \mathsf{E}_n}{M} =
 M|_\mathsf{Gr(P)}(\mathsf{p}\, \mathsf{E}_1\, \cdots\, \mathsf{E}_n)$.
Similarly, for each atom $\mathsf{B}_i$ in the body of the clause,
we have $\mwrs{\mathsf{B}_i}{M}{s} = \mwrt{\mathsf{B}_i\theta}{M}
= M|_\mathsf{Gr(P)}(\mathsf{B}_i\theta), 1\leq i\leq k$.
Consequently, if $M|_\mathsf{Gr(P)}(\mathsf{B}_i\theta)=true$ for all $i\in\{1,\ldots,k\}$, we also have that $\mwrs{\mathsf{B}_i}{M}{s}=true, 1\leq i\leq k$.
As $M$ is a model of $\mathsf{P}$, this implies that
$\mwrs{\mathsf{p}\, \mathsf{V}_1\, \cdots\, \mathsf{V}_n}{M}{s}=
M|_\mathsf{Gr(P)}(\mathsf{p}\, \mathsf{E}_1\, \cdots\, \mathsf{E}_n)=true$ and
therefore $M|_\mathsf{Gr(P)}$ is a model of $\mathsf{Gr(P)}$.
\end{proof}

{The above theorem is of course useful in connecting the
$\aleq[\mathcal{I}_\mathsf{P}]$-minimum Herbrand model of a program to its ground instantiation. However,
in order to prove the equivalence of the two semantics under consideration,
we will also need to go in the opposite direction and connect the $\leq$-minimum model of the ground program to the higher-order program. To this end we
introduce the previously mentioned \emph{semantic extensions} of a ground
expression.}
\begin{definition}
Let $\mathsf{P}$ be a program and $\bezem$ be the $\leq$-minimum model of $\mathsf{Gr(P)}$.
Let $\mathsf{E}$ be a ground expression of argument type $\rho$ and $d$ be an
element of $\mo{\rho}$. We will say that $d$ is a semantic extension of $E$ and
write $d \ee[\rho] \mathsf{E}$ if
\begin{itemize}
\item $\rho = \iota$ and $d = \mathsf{E}$;
\item $\rho = o$ and $d = \bezem(\mathsf{E})$;
\item $\rho = \rho' \rightarrow \pi$ and for all $d' \in \mo{\rho'}$ and $\mathsf{E}' \in U_{P,\rho'}$,
      such that $d' \ee[\rho'] \mathsf{E}'$, it holds that $d \, d' \ee[\pi] \mathsf{E} \, \mathsf{E}'$.
\end{itemize}
\end{definition}

{Compared to that of the ground restriction presented earlier, the notion of
extending a syntactic object to the realm of semantic elements, is more complicated.
In fact, even the existence of a semantic extension is not immediately obvious.
The next lemma guarantees that not only can such an extension  be constructed
for any expression of the language, but it also has an interesting property of
mirroring the ordering of semantic objects with respect to $\aleq[\tau]$ in a
corresponding ordering of the expressions with respect to $\preceq_\tau$.}
\begin{lemma} \label{lm_order_preservation}
Let $\mathsf{P}$ be a program, $\mathsf{Gr(P)}$ be its ground instantiation and
$\bezem$ be the $\leq$-minimum model of $\mathsf{Gr(P)}$. For every
argument type $\rho$ and every ground term $\mathsf{E} \in U_{\mathsf{P},\rho}$
\begin{enumerate}
\item There exists $e \in \mo{\rho}$ such that $e \ee[\rho] \mathsf{E}$.
\item For all $e, e' \in \mo{\rho}$ and all $\mathsf{E}' \in U_{\mathsf{P},\rho}$, if $e \ee[\rho] \mathsf{E}$, $e' \ee[\rho] \mathsf{E}'$ and $e \aleq[\rho] e'$, then $\mathsf{E} \preceq_\rho \mathsf{E}'$.
\end{enumerate}
\end{lemma}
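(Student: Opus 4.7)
The plan is to prove (1) and (2) simultaneously by structural induction on the argument type $\rho$. For the base cases both statements follow directly from the definitions: when $\rho = \iota$ take $e = \mathsf{E}$, and both $\aleq[\iota]$ and $\preceq_\iota$ collapse to syntactic equality; when $\rho = o$ take $e = \bezem(\mathsf{E})$, and the definition of $\preceq_o$ is set up precisely to mirror $\aleq[o]$ on $\{\mathit{false},\mathit{true}\}$.

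For the inductive step $\rho = \rho' \to \pi$, I would dispatch (2) first, as it is the easier direction. Assume $e \ee[\rho'\to\pi] \mathsf{E}$, $e' \ee[\rho'\to\pi] \mathsf{E}'$ and $e \aleq[\rho'\to\pi] e'$. For any $\mathsf{D} \in U_{\mathsf{P},\rho'}$, IH(1) for $\rho'$ supplies some $d \ee[\rho'] \mathsf{D}$; then $e(d) \ee[\pi] \mathsf{E}\,\mathsf{D}$ and $e'(d) \ee[\pi] \mathsf{E}'\,\mathsf{D}$ by the definition of $\ee[\rho'\to\pi]$, while pointwise ordering gives $e(d) \aleq[\pi] e'(d)$. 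IH(2) for $\pi$ then yields $\mathsf{E}\,\mathsf{D} \preceq_\pi \mathsf{E}'\,\mathsf{D}$; since $\mathsf{D}$ was arbitrary, this is precisely $\mathsf{E} \preceq_{\rho'\to\pi} \mathsf{E}'$.

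For (1), I would exhibit a witness $e : \mo{\rho'} \to \mo{\pi}$ defined by
\[
e(d') \;=\; \lub \bigl\{\, f \in \mo{\pi} \;:\; \exists\, \mathsf{E}' \in U_{\mathsf{P},\rho'},\; \exists\, d'' \aleq[\rho'] d',\; d'' \ee[\rho'] \mathsf{E}' \text{ and } f \ee[\pi] \mathsf{E}\,\mathsf{E}' \,\bigr\}.
\]
Monotonicity of $e$ is immediate because enlarging $d'$ only enlarges the defining set, so $e \in \mo{\rho'\to\pi}$. The real work lies in checking that $e \ee[\rho'\to\pi] \mathsf{E}$: fix $d' \ee[\rho'] \mathsf{E}'$ and unfold the definition of $\ee[\pi]$ down to the base type. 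Writing $\pi = \rho_1\to\cdots\to\rho_k\to o$ and taking any extensions $d_i \ee[\rho_i] \mathsf{D}_i$, the goal reduces to $e(d')\,d_1\cdots d_k = \bezem(\mathsf{E}\,\mathsf{E}'\,\mathsf{D}_1\cdots\mathsf{D}_k)$. Since pointwise lubs in $\mo{\pi}$ commute with application, the left-hand side becomes the lub, over all $f$ in the defining set of $e(d')$, of $f\,d_1\cdots d_k = \bezem(\mathsf{E}\,\mathsf{E}''\,\mathsf{D}_1\cdots\mathsf{D}_k)$, where $\mathsf{E}''$ is the term associated with $f$. IH(2) for $\rho'$ gives $\mathsf{E}'' \preceq_{\rho'} \mathsf{E}'$; the $\preceq$-Monotonicity Theorem then yields $\mathsf{E}\,\mathsf{E}'' \preceq_\pi \mathsf{E}\,\mathsf{E}'$, and unfolding $\preceq_\pi$ gives $\bezem(\mathsf{E}\,\mathsf{E}''\,\mathsf{D}_1\cdots\mathsf{D}_k) \leq \bezem(\mathsf{E}\,\mathsf{E}'\,\mathsf{D}_1\cdots\mathsf{D}_k)$. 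The lub is thus bounded above by the target, while equality holds because the instance $\mathsf{E}'' = \mathsf{E}'$, $d'' = d'$ belongs to the defining set --- here IH(1) for $\pi$ furnishes the required $f \ee[\pi] \mathsf{E}\,\mathsf{E}'$.

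The main obstacle is exactly this coherence check in (1): the lub defining $e(d')$ blends semantic extensions of \emph{different} ground terms $\mathsf{E}\,\mathsf{E}''$, and one must argue that they all collapse, once applied to the right arguments, to the intended value $\bezem(\mathsf{E}\,\mathsf{E}'\,\mathsf{D}_1\cdots\mathsf{D}_k)$. This is precisely the service provided by the $\preceq$-Monotonicity Theorem together with IH(2) for $\rho'$, and it is what makes this construction go through despite the non-uniqueness of semantic extensions.
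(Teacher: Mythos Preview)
Your argument is correct. The overall strategy---simultaneous structural induction on $\rho$, with the $\preceq$-Monotonicity Theorem and IH(2) doing the heavy lifting in the existence part---is the same as the paper's, and your treatment of part~(2) is essentially identical (you peel off one arrow while the paper unfolds all the way to $o$, a cosmetic difference).

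Where you genuinely diverge is in the construction of the witness for part~(1). The paper handles a predicate type $\rho = \rho_1 \to \cdots \to \rho_n \to o$ in one shot, defining $e$ directly as a Boolean-valued function on tuples: $e\,e_1\cdots e_n = \mathit{true}$ iff there exist $d_i \aleq[\rho_i] e_i$ with $d_i \ee[\rho_i] \mathsf{D}_i$ and $\bezem(\mathsf{E}\,\mathsf{D}_1\cdots\mathsf{D}_n)=\mathit{true}$. This is concrete and only invokes the inductive hypothesis at the argument types $\rho_i$. Your construction instead treats $\rho = \rho' \to \pi$ one arrow at a time, building $e(d')$ as a least upper bound in $\mo{\pi}$ and appealing to IH(1) at the codomain type $\pi$ to populate the defining set. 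This is more compositional and arguably cleaner, but it relies on two facts the paper does not explicitly state (though both are easy): that each $\mo{\pi}$ is a complete lattice, and that lubs there are computed pointwise so as to commute with application. Once those are granted, your coherence argument---bounding the lub above via IH(2) and $\preceq$-Monotonicity, and below via the canonical instance $d''=d'$, $\mathsf{E}''=\mathsf{E}'$---goes through exactly as you describe.
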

\begin{proof}
We prove both statements simultaneously, performing an induction on the structure
of $\rho$. Specifically, the first statement is proven by showing that in each
case we can construct a function $e$ of type $\rho$, which is monotonic with
respect to $\aleq[\rho]$ and satisfies $e \ee[\rho] \mathsf{E}$.

In the basis case, the construction of $e$ for types $\iota$ and $o$ is trivial.
Also, if $\rho = \iota$, then both $\ee[\rho]$ and $\aleq[\rho]$ reduce to equality,
so we have $\mathsf{E} = \mathsf{E}'$, which in this case is equivalent to
$\mathsf{E} \preceq_\rho \mathsf{E}'$. On the other hand, for
$\rho = o$, $\ee[\rho]$ identifies with equality, while $\aleq[\rho]$ and
$\preceq_{\rho}$ identify with $\leq$, so we have that
 $\bezem(\mathsf{E}) = e \leq e' = \bezem(\mathsf{E}')$ implies
 $\mathsf{E} \preceq_\rho \mathsf{E}'$.

For a more complex type $\rho = \rho_1 \rightarrow \cdots \rightarrow \rho_n \rightarrow o$, $n > 0$,
we can easily construct $e$, as follows:
\begin{align*}
e \,e_1\,\cdots\,e_n =
  \begin{cases}
      \mathit{true},  & \mbox{if there exist } d_1, \ldots, d_n \mbox{ and ground terms } \mathsf{D}_1,\ldots,\mathsf{D}_n  \mbox{ such that,}\\
             &  \mbox{for all } i, d_i  \aleq[\rho_i] e_i, d_i \ee[\rho_i] \mathsf{D}_i \mbox{ and } \bezem(\mathsf{E}\, \mathsf{D}_1 \, \cdots \, \mathsf{D}_n)=\mathit{true} \\
      \mathit{false}, & \mbox{otherwise.}
  \end{cases}
\end{align*}
To see that $e$ is monotonic, consider $e_1, \ldots, e_n, e'_1, \ldots, e'_n$,
such that $e_1 \aleq[\rho_1] e'_1, \ldots, e_n \aleq[\rho_n] e'_n$ and observe
that $e \,e_1\,\cdots\,e_n = true$ implies $e \,e'_1\,\cdots\,e'_n=true$, due to
the transitivity of $\aleq[\rho_i]$.
We will now show that $e \ee[\rho] \mathsf{E}$, i.e. for all $e_1, \ldots, e_n$ and
$\mathsf{E}_1, \ldots, \mathsf{E}_n$ such that $e_1 \ee[\rho_1] \mathsf{E}_1, \ldots, e_n \ee[\rho_n] \mathsf{E}_n$, it
holds $e \,e_1\,\cdots\,e_n = \bezem(\mathsf{E} \, \mathsf{E}_1 \, \cdots \, \mathsf{E}_n)$. This is trivial
if $\bezem(\mathsf{E} \; \mathsf{E}_1 \, \cdots \, \mathsf{E}_n) = true$, since $e_i \aleq[\rho_i] e_i$. Let
us now examine the case that $\bezem(\mathsf{E} \; \mathsf{E}_1 \, \cdots \, \mathsf{E}_n) = \mathit{false}$. For the
sake of contradiction, assume $e \,e_1\,\cdots\,e_n= true$. Then,
by the construction of $e$, there must exist $d_1, \ldots, d_n$ and $\mathsf{D}_1,\ldots,\mathsf{D}_n$
such that, for all $i$, $d_i  \aleq[\rho_i] e_i$, $d_i \ee[\rho_i] \mathsf{D}_i$ and $\bezem(\mathsf{E}\, \mathsf{D}_1 \, \cdots \, \mathsf{D}_n)=true$. By the induction hypothesis, we have that $\mathsf{D}_i \preceq_{\rho_i} \mathsf{E}_i$, for all $i \in \{1, \ldots, n\}$. This, by the $\preceq$-Monotonicity Property of $\bezem$ (Theorem \ref{th-preceq-monotonicity}), yields that $\bezem(\mathsf{E}\, \mathsf{D}_1 \, \cdots \, \mathsf{D}_n)=true \leq \bezem(\mathsf{E} \; \mathsf{E}_1 \, \cdots \, \mathsf{E}_n) = \mathit{false}$, which is obviously a contradiction. Therefore it has to be that $e \,e_1\,\cdots\,e_n= \mathit{false}$.

Finally, in order to prove the second statement and conclude the induction step, we need to show that for all terms $\mathsf{D}_1 \in U_{\mathsf{P},\rho_1}, \ldots, \mathsf{D}_n \in U_{\mathsf{P},\rho_n}$,
it holds $\mathsf{E} \; \mathsf{D}_1 \, \cdots \, \mathsf{D}_n \preceq_o \mathsf{E}' \; \mathsf{D}_1 \, \cdots \, \mathsf{D}_n$. By the induction hypothesis, there exist $d_1, \ldots, d_n$, such that $d_1 \ee[\rho_1] \mathsf{D}_1, \ldots, d_n \ee[\rho_n] \mathsf{D}_n$.
Because $e \ee[\rho] \mathsf{E}$ and $\mathsf{E} \; \mathsf{D}_1 \, \cdots \, \mathsf{D}_n$ is of type $o$, we
have $e \; d_1 \, \cdots \, d_n = \bezem(\mathsf{E} \; \mathsf{D}_1 \, \cdots \, \mathsf{D}_n)$ by definition.
Similarly, we also have $e' \; d_1 \, \cdots \, d_n = \bezem(\mathsf{E}' \; \mathsf{D}_1 \, \cdots \, \mathsf{D}_n)$. Moreover,
by $e \aleq[\rho] e'$ we have that
$e \; d_1 \, \cdots \, d_n \aleq[o] e' \; d_1 \, \cdots \, d_n$. This yields the desired result, since $\aleq[o]$ identifies with $\preceq_o$.
\end{proof}

{The following variation of the Substitution Lemma states that if the building elements of an expression are assigned meanings that are semantic extensions of their syntactic counterparts, then the meaning of the expression is itself a semantic extension of the expression.}
\begin{lemma} \label{lm_state_substitution_extension}
Let $\mathsf{P}$ be a program, $\mathsf{Gr(P)}$ be its ground instantiation and $I$ be a Herbrand
interpretation of $\mathsf{P}$. Also, let $\mathsf{E}$ be an expression
of some argument type $\rho$ and
let $s$ be a Herbrand state and $\theta$ be a ground substitution, both with domain $vars(\mathsf{E})$.
If, for all predicates $\mathsf{p}$ of type $\pi$ appearing in $\mathsf{E}$, $\mwrt{\mathsf{p}}{I} \ee[\pi] \mathsf{p}$
and, for all variables $\mathsf{V}$ of type $\rho'$ in $vars(\mathsf{E})$, $s(\mathsf{V}) \ee[\rho'] \theta(\mathsf{V})$, then
$\mwrs{\mathsf{E}}{I}{s} \ee[\rho] \mathsf{E}\theta$.
\end{lemma}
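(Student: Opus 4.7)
The plan is to proceed by structural induction on the expression $\mathsf{E}$, following the same pattern as the standard Substitution Lemma (Lemma~\ref{substitution_lemma}), but replacing the use of syntactic equality with the semantic extension relation $\ee$.

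For the base cases: if $\mathsf{E}$ is a predicate constant $\mathsf{p}$ or a variable $\mathsf{V}$, then $\mwrs{\mathsf{E}}{I}{s} \ee[\rho] \mathsf{E}\theta$ reduces directly to one of the two hypotheses of the lemma. If $\mathsf{E}$ is an individual constant $\mathsf{c}$, then $\mwrs{\mathsf{c}}{I}{s} = I(\mathsf{c}) = \mathsf{c}$ since $I$ is a Herbrand interpretation, and $\mathsf{c}\theta = \mathsf{c}$; since $\ee[\iota]$ is syntactic equality on $U_{\mathsf{P},\iota}$, the claim holds.

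For the inductive steps, I would handle three cases. First, for $\mathsf{E} = (\mathsf{f}\ \mathsf{E}_1 \cdots \mathsf{E}_n)$ with each $\mathsf{E}_i : \iota$, the induction hypothesis yields $\mwrs{\mathsf{E}_i}{I}{s} \ee[\iota] \mathsf{E}_i\theta$, i.e.\ $\mwrs{\mathsf{E}_i}{I}{s} = \mathsf{E}_i\theta$; combining this with the Herbrand property $I(\mathsf{f})\ \mathsf{t}_1 \cdots \mathsf{t}_n = \mathsf{f}\ \mathsf{t}_1 \cdots \mathsf{t}_n$ gives $\mwrs{\mathsf{E}}{I}{s} = (\mathsf{f}\ \mathsf{E}_1 \cdots \mathsf{E}_n)\theta$, as required. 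Second, for $\mathsf{E} = (\mathsf{E}_1\ \mathsf{E}_2)$ where $\mathsf{E}_1 : \rho' \rightarrow \rho$ and $\mathsf{E}_2 : \rho'$, the induction hypothesis gives $\mwrs{\mathsf{E}_1}{I}{s} \ee[\rho'\rightarrow\rho] \mathsf{E}_1\theta$ and $\mwrs{\mathsf{E}_2}{I}{s} \ee[\rho'] \mathsf{E}_2\theta$; the conclusion then follows immediately by unfolding the definition of $\ee$ at a functional type, which is precisely tailored to this situation. Finally, for $\mathsf{E} = (\mathsf{E}_1 \approx \mathsf{E}_2)$, the induction hypothesis gives $\mwrs{\mathsf{E}_i}{I}{s} = \mathsf{E}_i\theta$, so $\mwrs{\mathsf{E}}{I}{s} = \mathit{true}$ iff $\mathsf{E}_1\theta = \mathsf{E}_2\theta$ iff $\bezem(\mathsf{E}\theta) = \mathit{true}$, which is exactly what $\ee[o]$ requires.

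No step is particularly hard; the statement has been engineered so that the definition of $\ee$ at functional types directly absorbs the application case, which is usually the only delicate point in such lemmas. The only thing to be mindful of is that at the base case for variables of predicate type the hypothesis already provides $s(\mathsf{V}) \ee[\rho'] \theta(\mathsf{V})$, so we do not need to reconstruct a semantic extension from scratch — existence of such extensions is handled separately by Lemma~\ref{lm_order_preservation}, and here we simply propagate them through the syntactic structure.
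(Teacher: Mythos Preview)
Your proposal is correct and follows essentially the same structural induction as the paper's proof, with the same case analysis and the same use of the definition of $\ee$ at each type. The only difference is presentational: you spell out the role of $\bezem$ in the $\approx$ case slightly more explicitly than the paper does.
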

\begin{proof}
The proof is by induction on the structure of $\mathsf{E}$.
The basis cases $\mathsf{E} = \mathsf{p}$ and $\mathsf{E} = \mathsf{V}$
hold by assumption and $\mathsf{E} = c:\iota$ is trivial.
For the first case of the induction step, let $\mathsf{E} =
(\mathsf{f}\ \mathsf{E}_1 \ \cdots \  \mathsf{E}_n)$,
where $\mathsf{E}_1,\ldots, \mathsf{E}_n$ are of type
$\iota$. By the induction hypothesis, we have that
$\mwrs{\mathsf{E}_1}{I}{s} \ee[\iota] \mathsf{E}_1\theta, \ldots, \mwrs{\mathsf{E}_n}{I}{s} \ee[\iota] \mathsf{E}_n\theta$.
As $\ee[\iota]$ is defined as equality, we have that $\mwrs{\mathsf{E}}{I}{s} = I(\mathsf{f})\ \mwrs{\mathsf{E}_1}{I}{s}\ \cdots \ \mwrs{\mathsf{E}_n}{I}{s} = \mathsf{f}\ \mathsf{E}_1\theta\ \cdots \ \mathsf{E}_n\theta = \mathsf{E}\theta$ and therefore $\mwrs{\mathsf{E}}{I}{s} \ee[\iota] \mathsf{E}\theta$.
For the second case, let $\mathsf{E} = \mathsf{E}_1 \, \mathsf{E}_2$,
where $\mathsf{E}_1$ is of type $\rho_1 = \rho_2\rightarrow \pi$ and $\mathsf{E}_2$ is of type $\rho_2$; then, $\mwrs{\mathsf{E}}{I}{s} = \mwrs{\mathsf{E}_1}{I}{s}\, \mwrs{\mathsf{E}_2}{I}{s}$.
By the induction hypothesis, $\mwrs{\mathsf{E}_1}{I}{s} \ee[\rho_2\rightarrow \pi] \mathsf{E}_1\theta$
and $\mwrs{\mathsf{E}_2}{I}{s} \ee[\rho_2] \mathsf{E}_2\theta$,
thus, by definition, $\mwrs{\mathsf{E}}{I}{s}=\mwrs{\mathsf{E}_1}{I}{s}\;\mwrs{\mathsf{E}_2}{I}{s} \ee[\pi] \mathsf{E}_1\theta \; \mathsf{E}_2\theta = (\mathsf{E}_1 \, \mathsf{E}_2)\theta=\mathsf{E}\theta$.
Finally, we have the case that $\mathsf{E} = (\mathsf{E}_1 \approx \mathsf{E}_2)$, where $\mathsf{E}_1$ and $\mathsf{E}_2$ are both of type $\iota$. The induction hypothesis yields $\mwrs{\mathsf{E}_1}{I}{s} \ee[\iota] \mathsf{E}_1\theta$ and $\mwrs{\mathsf{E}_2}{I}{s} \ee[\iota] \mathsf{E}_2\theta$ or, since $\ee[\iota]$ is defined as equality, $\mwrs{\mathsf{E}_1}{I}{s} = \mathsf{E}_1\theta$ and $\mwrs{\mathsf{E}_2}{I}{s} = \mathsf{E}_2\theta$. Then  $\mwrs{\mathsf{E}_1}{I}{s} = \mwrs{\mathsf{E}_2}{I}{s}$ iff  $\mathsf{E}_1\theta = \mathsf{E}_2\theta$ and, equivalently, $\mwrs{\mathsf{E}}{I}{s} = true$ iff $\mathsf{E}\theta = true$, which implies $\mwrs{\mathsf{E}}{I}{s} \ee[o] \mathsf{E}\theta$.
\end{proof}

{We are now ready to present the main result of this paper. The theorem
establishes the equivalence of Wadge's semantics and Bezem's semantics, in
stating that their respective minimum models assign the same meaning to all ground atoms.}
\begin{theorem}
Let $\mathsf{P}$ be a program and let $\mathsf{Gr(P)}$ be its ground instantiation.
Let $M_\mathsf{P}$ be the $\aleq[\mathcal{I}_\mathsf{P}]$-minimum Herbrand model of $\mathsf{P}$ and let $\bezem$ be the
$\leq$-minimum model of $\mathsf{Gr(P)}$. Then, for every $\mathsf{A} \in U_{\mathsf{P},o}$
it holds $\mwrt{\mathsf{A}}{M_\mathsf{P}} = \bezem(\mathsf{A})$.
\end{theorem}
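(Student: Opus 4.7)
The plan is to prove both directions of the equality. The direction $\bezem(\mathsf{A}) \leq \mwrt{\mathsf{A}}{M_\mathsf{P}}$ follows immediately from Theorem~\ref{th_ground_model}: the ground restriction $M_\mathsf{P}|_{\mathsf{Gr(P)}}$ is a model of $\mathsf{Gr(P)}$, so by the $\leq$-minimality of $\bezem$ we have $\bezem \subseteq M_\mathsf{P}|_{\mathsf{Gr(P)}}$, and hence $\bezem(\mathsf{A}) \leq M_\mathsf{P}|_{\mathsf{Gr(P)}}(\mathsf{A}) = \mwrt{\mathsf{A}}{M_\mathsf{P}}$.

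For the reverse direction, the idea is to construct a Herbrand interpretation $I$ of $\mathsf{P}$ satisfying two properties: (a) $I(\mathsf{p}) \ee[\pi] \mathsf{p}$ for every predicate constant $\mathsf{p}:\pi$ appearing in $\mathsf{P}$, and (b) $I$ is a Herbrand model of $\mathsf{P}$. Property (a), via Lemma~\ref{lm_state_substitution_extension} applied with a trivial state and substitution (since $\mathsf{A}$ is ground and $vars(\mathsf{A})=\emptyset$), yields $\mwrt{\mathsf{A}}{I} \ee[o] \mathsf{A}$, i.e., $\mwrt{\mathsf{A}}{I} = \bezem(\mathsf{A})$. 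Property (b), together with the $\aleq[\mathcal{I}_\mathsf{P}]$-minimality of $M_\mathsf{P}$ and Lemma~\ref{interp-state-monotonicity}, gives $\mwrt{\mathsf{A}}{M_\mathsf{P}} \leq \mwrt{\mathsf{A}}{I}$; combining the two gives $\mwrt{\mathsf{A}}{M_\mathsf{P}} \leq \bezem(\mathsf{A})$. A natural choice for $I$, mirroring the construction used in the proof of Lemma~\ref{lm_order_preservation}, sets $I(\mathsf{p})(d_1)\cdots(d_n) = \mathit{true}$ iff there exist $d'_i \aleq[\rho_i] d_i$ and ground terms $\mathsf{D}_i$ with $d'_i \ee[\rho_i] \mathsf{D}_i$ and $\bezem(\mathsf{p}\,\mathsf{D}_1\cdots\mathsf{D}_n) = \mathit{true}$. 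Monotonicity of $I(\mathsf{p})$ is immediate from the transitivity of $\aleq$, and property (a) is then essentially the content of Lemma~\ref{lm_order_preservation}(1).

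The main obstacle is verifying property (b). Given a clause $\mathsf{p}\,\mathsf{V}_1\cdots\mathsf{V}_n \leftarrow \mathsf{E}_1,\ldots,\mathsf{E}_m$ and a state $s$ with $\mwrs{\mathsf{E}_j}{I}{s} = \mathit{true}$ for every $j$, one has to show $I(\mathsf{p})(s(\mathsf{V}_1))\cdots(s(\mathsf{V}_n)) = \mathit{true}$, i.e., one has to exhibit $d'_i \aleq[\rho_i] s(\mathsf{V}_i)$ and ground terms $\mathsf{D}_i$ with $d'_i \ee[\rho_i] \mathsf{D}_i$ and $\bezem(\mathsf{p}\,\mathsf{D}_1\cdots\mathsf{D}_n) = \mathit{true}$. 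The natural route is to pick a ground substitution $\theta$ with $\theta(\mathsf{V}_i) = \mathsf{D}_i$ (together with appropriate ground terms for any individual variables of the body) and use Lemma~\ref{lm_state_substitution_extension} at the state $s^*$ defined by $s^*(\mathsf{V}_i) = d'_i$ to transfer the truth of the body from $I$ to $\bezem$, thereby forcing $\bezem(\mathsf{p}\,\mathsf{D}_1\cdots\mathsf{D}_n) = \mathit{true}$ through the ground instantiation of the clause. The delicate point is that $s^* \aleq[\mathcal{S}_\mathsf{P}] s$, so monotonicity alone does not transfer truth from $s$ to $s^*$; the argument must instead extract the witnesses $d'_i$ and $\mathsf{D}_i$ directly from the evaluation of the body at $s$. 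This is where the definitional restriction does the crucial work: because the only non-individual variables that may appear in the body are the formal parameters $\mathsf{V}_i$, a structural induction on the body expressions, in the spirit of Lemma~\ref{lm_state_substitution_extension}, can recursively produce the required witnesses, using at each step the defining property of the semantic extensions $I(\mathsf{r})$ of the predicate constants $\mathsf{r}$ occurring in the body.
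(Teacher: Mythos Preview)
Your overall strategy is correct and the easy direction via Theorem~\ref{th_ground_model} is fine, but your choice of $I$ does not work: the ``from below'' construction borrowed from Lemma~\ref{lm_order_preservation} is \emph{not} in general a Herbrand model of $\mathsf{P}$, and no structural induction on body expressions can rescue it. Take the program with facts $\mathsf{q}_1\,\mathsf{a}$ and $\mathsf{q}_2\,\mathsf{b}$ and the clause $\mathsf{p}\,\mathsf{V} \leftarrow (\mathsf{V}\,\mathsf{a}), (\mathsf{V}\,\mathsf{b})$. The only ground terms of type $\iota\to o$ are $\mathsf{q}_1$ and $\mathsf{q}_2$, and $\bezem(\mathsf{p}\,\mathsf{q}_1) = \bezem(\mathsf{p}\,\mathsf{q}_2) = \mathit{false}$, so your $I(\mathsf{p})$ is identically $\mathit{false}$. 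Yet at the state $s$ with $s(\mathsf{V})$ the characteristic function of $\{\mathsf{a},\mathsf{b}\}$ both body atoms evaluate to $\mathit{true}$, so $I$ is not a model. The body here contains no predicate constant at all, so there is no ``defining property of $I(\mathsf{r})$'' to invoke; and although there \emph{are} $d' \aleq[\iota\to o] s(\mathsf{V})$ that extend ground terms (namely $\{\mathsf{a}\} \ee[\iota\to o] \mathsf{q}_1$ and $\{\mathsf{b}\} \ee[\iota\to o] \mathsf{q}_2$), neither makes both conjuncts true in $\bezem$, so no single witness can serve the whole body.

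The paper's fix is to dualise the construction: define $N(\mathsf{p})\,d_1\cdots d_n = \mathit{false}$ iff there exist $e_i$ with $d_i \aleq[\rho_i] e_i$, ground $\mathsf{E}_i$ with $e_i \ee[\rho_i] \mathsf{E}_i$, and $\bezem(\mathsf{p}\,\mathsf{E}_1\cdots\mathsf{E}_n) = \mathit{false}$. Property~(a) holds by the same argument as yours, but now~(b) is proved by contrapositive: if the head is $\mathit{false}$ at $(d_1,\ldots,d_n)$, the witnesses $e_i,\mathsf{E}_i$ are handed to you by the definition; since $\bezem$ is a model of $\mathsf{Gr(P)}$ some $\bezem(\mathsf{B}_j\theta) = \mathit{false}$; Lemma~\ref{lm_state_substitution_extension} gives $\mwrs{\mathsf{B}_j}{N}{s'} = \mathit{false}$ at the state $s'$ with $s'(\mathsf{V}_i) = e_i$; and now $s \aleq[\mathcal{S}_\mathsf{P}] s'$ points the \emph{right} way, so Lemma~\ref{interp-state-monotonicity} pushes the falsity down to $s$. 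Your wrap-up then goes through unchanged with $N$ in place of $I$.
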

\begin{proof}

We will construct an interpretation $N$ for $\mathsf{P}$ and prove
some key properties for this interpretation. Then we will utilize these properties
to prove the desired result. The definition of $N$ is as follows:
\begin{align*}
& \mbox{For every } \mathsf{p} : \rho_1 \rightarrow \cdots \rightarrow \rho_n \rightarrow o \mbox{ and all } d_1 \in \mo{\rho_1}, \ldots, d_n \in \mo{\rho_n}\\
& N(\mathsf{p})\,d_1\,\cdots\,d_n =
   \begin{cases}
    \mathit{false},  & \mbox{if there exist } e_1, \ldots, e_n \mbox{ and ground terms } \mathsf{E}_1,\ldots,\mathsf{E}_n \mbox{ such that,} \\
            & \mbox{for all } i, d_i  \aleq[\rho_i] e_i, e_i \ee[\rho_i] \mathsf{E}_i \mbox{ and } \bezem(\mathsf{p}\, \mathsf{E}_1 \, \cdots \, \mathsf{E}_n)=\mathit{false}\\
    true,   & \mbox{otherwise}
   \end{cases}
\end{align*}

Observe that $N$ is a valid Herbrand interpretation of $\mathsf{P}$, in the sense
that it assigns elements in $\mo{\pi}$ (i.e. functions that are monotonic with
respect to $\aleq[\pi]$) to every predicate of type $\pi$ in $\mathsf{P}$. Indeed,
if it was not so, then for some predicate $\mathsf{p} : \pi= \rho_1 \rightarrow \cdots \rightarrow \rho_n \rightarrow o$,
there would exist tuples $(d_1, \ldots, d_n)$ and $(d'_1, \ldots, d'_n)$ with
$d_1 \aleq[\rho_1] d'_1, \ldots, d_n \aleq[\rho_n] d'_n$, such that
$N(\mathsf{p})\,d_1\,\cdots\,d_n = true$ and $N(\mathsf{p})\,d'_1\,\cdots\,d'_n = \mathit{false}$.
By definition, the fact that $N(\mathsf{p})\,d'_1\,\cdots\,d'_n$ is assigned the value
$\mathit{false}$, would imply that there exist $e_1, \ldots, e_n$ and $\mathsf{E}_1,\ldots,\mathsf{E}_n$
as in the above definition, such that $\bezem(\mathsf{p}\, \mathsf{E}_1 \, \cdots \, \mathsf{E}_n)=\mathit{false}$ and
$d'_1 \aleq[\rho_1] e_1, \ldots, d'_n \aleq[\rho_n] e_n$. Being that $\aleq[\rho_i]$
are transitive relations, the latter yields that
$d_1 \aleq[\rho_1] e_1, \ldots, d_n \aleq[\rho_n] e_n$. Therefore, by definition,
$N(\mathsf{p})\,d_1\,\cdots\,d_n$ should also evaluate to $\mathit{false}$, which constitutes a
contradiction and thus confirms that the meaning of $p$ is monotonic with respect to $\aleq[\pi]$.

It is also straightforward to see that $N(\mathsf{p}) \ee[\pi] \mathsf{p}$, i.e.
for all $d_1, \ldots, d_n$ and all ground terms $\mathsf{D}_1, \ldots, \mathsf{D}_n$
such that $d_1 \ee[\rho_1] \mathsf{D}_1, \ldots, d_n \ee[\rho_n] \mathsf{D}_n$, we have
$N(\mathsf{p}) \,d_1\,\cdots\,d_n = \bezem(\mathsf{p} \; \mathsf{D}_1 \, \cdots \, \mathsf{D}_n)$. Because
$d_i \aleq[\rho_i] d_i$, this holds trivially if $\bezem(\mathsf{p} \; \mathsf{D}_1 \, \cdots \, \mathsf{D}_n) = \mathit{false}$.
Now let $\bezem(\mathsf{p} \; \mathsf{D}_1 \, \cdots \, \mathsf{D}_n) = true$ and assume, for the sake of
contradiction, that $N(\mathsf{p}) \,d_1\,\cdots\,d_n = \mathit{false}$. Then, by the definition
of $N$, there must exist $e_1, \ldots, e_n$ and $\mathsf{E}_1,\ldots,\mathsf{E}_n$ such that, for
all $i$, $d_i  \aleq[\rho_i] e_i$, $e_i \ee[\rho_i] \mathsf{E}_i$ and $\bezem(\mathsf{p}\, \mathsf{E}_1 \, \cdots \, \mathsf{E}_n)=\mathit{false}$.
Thus, by the second part of Lemma \ref{lm_order_preservation}, for all $i$, $\mathsf{D}_i \preceq_{\rho_i} \mathsf{E}_i$ and,
by the $\preceq$-Monotonicity Property of $\bezem$, $\bezem(\mathsf{p}\, \mathsf{D}_1 \, \cdots \, \mathsf{D}_n) \leq \bezem(\mathsf{p} \; \mathsf{E}_1 \, \cdots \, \mathsf{E}_n)$,
which is obviously a contradiction. Thus we conclude that $N(\mathsf{p}) \,d_1\,\cdots\,d_n=true$.

Next we prove that $N$ is a model of $\mathsf{P}$. Let $\mathsf{p} \, \mathsf{V}_1 \, \cdots \, \mathsf{V}_n \leftarrow \mathsf{B}_1,\ldots,\mathsf{B}_k$ be a clause in $\mathsf{P}$ and let $\{\mathsf{V}_1, \ldots, \mathsf{V}_n, \mathsf{X}_1, \ldots, \mathsf{X}_m\}$, with $\mathsf{V}_i:\rho_i$, for all $i\in \{1, \ldots,n\}$, and $\mathsf{X}_i:\iota$, for all $i\in \{1, \ldots,m\}$, be the set of variables appearing in the clause. Then, it suffices to show that, for any tuple $(d_1, \ldots, d_n)$ of arguments and any Herbrand state $s$ such that $s(\mathsf{V}_i) = d_i$ for all $i\in\{1,\ldots, n\}$, $N(\mathsf{p})\,d_1\,\cdots\,d_n = \mathit{false}$ implies that, for at least one $j \in \{1,\ldots,k\}$, $\mwrs{\mathsf{B}_j}{N}{s} = \mathit{false}$. Again, by the definition of $N$, we see that if $N(\mathsf{p})\,d_1\,\cdots\,d_n = \mathit{false}$, then there exist $e_1, \ldots, e_n$ and ground terms $\mathsf{E}_1,\ldots,\mathsf{E}_n$ such that $\bezem(\mathsf{p}\, \mathsf{E}_1 \, \cdots \, \mathsf{E}_n)=\mathit{false}$, $d_1 \aleq[\rho_1] e_1, \ldots, d_n \aleq[\rho_n] e_n$ and $e_1 \ee[\rho_1] \mathsf{E}_1, \ldots, d_n \ee[\rho_n] \mathsf{E}_n$. Let $\theta$ be a ground substitution such that $\theta(\mathsf{V}_i) = \mathsf{E}_i$ for all $i\in\{1,\ldots,n\}$ and, for all $i\in\{1,\ldots,m\}$, $\theta(\mathsf{X}_i) = s(\mathsf{X}_i)$; then there exists a ground instantiation $\mathsf{p} \, \mathsf{E}_1 \, \cdots \, \mathsf{E}_n \leftarrow \mathsf{B}_1\theta,\ldots,\mathsf{B}_k\theta$ of the above clause in $\mathsf{Gr(P)}$. As $\bezem$ is a model of the ground program, $\bezem(\mathsf{p}\, \mathsf{E}_1 \, \cdots \, \mathsf{E}_n)=\mathit{false}$ implies that there exists at least one $j \in \{1,\ldots,k\}$ such that $\bezem(\mathsf{B}_j\theta)=\mathit{false}$. We are going to show that the latter implies that $\mwrs{\mathsf{B}_j}{N}{s} = \mathit{false}$, which proves that $N$ is a model of $\mathsf{P}$. Indeed, let $s'$ be a Herbrand state such that $s'(\mathsf{V}_i) = e_i \ee[\rho_i] \theta(\mathsf{V}_i) = \mathsf{E}_i$ for all $i\in\{1,\ldots, n\}$ and $s'(\mathsf{X}_i)=\theta(\mathsf{X}_i)=s(\mathsf{X}_i)$ for all $i\in\{1,\ldots,m\}$. As we have shown earlier, $N(\mathsf{p}') \ee[\pi'] \mathsf{p}'$ for any predicate $\mathsf{p}':\pi'$, thus by Lemma \ref{lm_state_substitution_extension} we get $\mwrs{\mathsf{B}_j}{N}{s'} \ee[o] \mathsf{B}_j\theta$. Since $\mathsf{B}_j$ is of type $o$, the latter reduces to $\mwrs{\mathsf{B}_j}{N}{s'} = \bezem(\mathsf{B}_j\theta) = \mathit{false}$. Also, because $d_i \aleq[\rho_i] e_i$, i.e. $s \aleq[\mathcal{S}_\mathsf{P}] s'$, by the second part of Lemma \ref{interp-state-monotonicity} we get $\mwrs{\mathsf{B}_j}{N}{s} \aleq[o] \mwrs{\mathsf{B}_j}{N}{s'}$, which makes $\mwrs{\mathsf{B}_j}{N}{s} = \mathit{false}$.

Now we can proceed to prove that, for all $\mathsf{A} \in U_{\mathsf{P},o}$, $\mwrt{\mathsf{A}}{M_\mathsf{P}} = \bezem(\mathsf{A})$.
Let $\mathsf{A}$ be of the form $\mathsf{p} \; \mathsf{E}_1 \, \cdots \, \mathsf{E}_n$, where
$\mathsf{p} : \rho_1 \rightarrow \cdots \rightarrow \rho_n \rightarrow o \in \mathsf{P}$ and let
$d_1 = \mwrt{\mathsf{E}_1}{M_\mathsf{P}}, \ldots, d_n = \mwrt{\mathsf{E}_n}{M_\mathsf{P}}$. As we have shown,
$N$ is a Herbrand model of $\mathsf{P}$, while $M_\mathsf{P}$ is the minimum, with respect to $\aleq[\mathcal{I}_\mathsf{P}]$, of all Herbrand models of $\mathsf{P}$,
therefore we have that $M_\mathsf{P} \aleq[\mathcal{I}_\mathsf{P}] N$. By definition, this gives us that
$M_\mathsf{P}(\mathsf{p})\; d_1 \, \cdots \, d_n \aleq[o] N(\mathsf{p}) \; d_1 \, \cdots \, d_n \; (1)$ and,
by the first part of Lemma \ref{interp-state-monotonicity}, that
$d_1 \aleq[\rho_1] \mwrt{\mathsf{E}_1}{N}, \ldots, d_n \aleq[\rho_n] \mwrt{\mathsf{E}_n}{N} \;(2)$.
Moreover, for all predicates $\mathsf{p}':\pi'$ in $\mathsf{P}$, we have $N(\mathsf{p}') \ee[\pi'] \mathsf{p}'$ and thus,
by Lemma \ref{lm_state_substitution_extension}, taking $s$ and $\theta$ to be empty,
we get $\mwrt{\mathsf{E}_i}{N} \ee[\rho_i] \mathsf{E}_i, 1\leq i\leq n$. In conjunction with (2), the latter suggests
that if $\bezem(\mathsf{p} \; \mathsf{E}_1 \; \cdots \; \mathsf{E}_n) = \mathit{false}$ then $N(\mathsf{p}) \; d_1 \; \cdots \; d_n = \mathit{false}$,
or, in other words, that $ N(\mathsf{p}) \; d_1 \, \cdots \, d_n \leq \bezem(\mathsf{p} \; \mathsf{E}_1 \; \cdots \; \mathsf{E}_n)$.
Because of (1), this makes it that
$ M_\mathsf{P}(\mathsf{p}) \; d_1 \, \cdots \, d_n \leq \bezem(\mathsf{p} \; \mathsf{E}_1 \; \cdots \; \mathsf{E}_n)\;(3)$. On the
other hand, by Theorem \ref{th_ground_model}, $M_\mathsf{P}|_\mathsf{Gr(P)}$ is a model of $\mathsf{Gr(P)}$ and therefore $\bezem(\mathsf{p} \; \mathsf{E}_1 \; \cdots \; \mathsf{E}_n) \leq M_\mathsf{P}|_\mathsf{Gr(P)}(\mathsf{p} \; \mathsf{E}_1 \; \cdots \; \mathsf{E}_n)$, since $\bezem$ is the minimum model of $\mathsf{Gr(P)}$. By the definition of $\mathsf{Gr(P)}$ and the meaning of application, the latter becomes
$\bezem(\mathsf{p} \; \mathsf{E}_1 \; \cdots \; \mathsf{E}_n) \leq M_\mathsf{P}|_\mathsf{Gr(P)}(\mathsf{p} \; \mathsf{E}_1 \; \cdots \; \mathsf{E}_n)=M_\mathsf{P}(\mathsf{p} \; \mathsf{E}_1 \; \cdots \; \mathsf{E}_n) = M_\mathsf{P}(\mathsf{p})\; \mwrt{\mathsf{E}_1}{M_\mathsf{P}}\; \cdots \; \mwrt{\mathsf{E}_n}{M_\mathsf{P}} = M_\mathsf{P}(\mathsf{p})\; d_1 \, \cdots \, d_n$. The last relation and (3) can only be true simultaneously, if all the above relations hold as equalities, in particular if $\bezem(\mathsf{p} \; \mathsf{E}_1 \, \cdots \, \mathsf{E}_n) = \mwrt{\mathsf{p} \; \mathsf{E}_1 \, \cdots \, \mathsf{E}_n}{M_\mathsf{P}}$.
\end{proof}

\section{Discussion}
\label{sec:concl}
\label{section5}
We have considered the two existing extensional approaches to the semantics of higher-order
logic programming, and have demonstrated that they coincide for the class of definitional
programs. It is therefore natural to wonder whether the two semantic approaches continue
to coincide if we extend the class of programs we consider. Unfortunately this is not
the case, as we discuss below.

A seemingly mild extension to our source language would be to allow higher-order predicate
variables that are not formal parameters of a clause, to appear in its body. Such programs
are legitimate under Bezem's semantics (ie., they belong to the hoapata class). Moreover,
a recent extension of Wadge's semantics~\cite{CharalambidisHRW13} also allows such programs.
However, for this extended class of programs the equivalence of the two semantic approaches
no longer holds as the following example illustrates.
\begin{example}
Consider the following extended program:
\[
\begin{array}{l}
\mbox{\tt p(a):-Q(a).}
\end{array}
\]
Following Bezem's semantics, we initially take the ground instantiation of the program, namely:
\[
\begin{array}{l}
\mbox{\tt p(a):-p(a).}
\end{array}
\]
and then compute the least model of the above program which assigns to the atom
{\tt p(a)} the value $\mathit{false}$. On the other hand, under the approach
in~\cite{CharalambidisHRW13}, the atom {\tt p(a)} has the value $\mathit{true}$
in the minimum Herbrand model of the initial program. This is due to the fact that
under the semantics of~\cite{CharalambidisHRW13}, our initial program reads (intuitively
speaking) as follows: ``{\tt p(a)} is true if there exists a relation that is
true of {\tt a}''; actually, there exists one such relation, namely the set $\{{\tt a}\}$.
This discrepancy between the two semantics is due to the fact that Wadge's semantics is
based on {\em sets} and not solely on the syntactic entities that appear in the
program.\qed
\end{example}

Future work includes the extension of Bezem's approach to higher-order logic programs
with negation. An extension of Wadge's approach for such programs has recently
been performed in~\cite{CharalambidisER14}. More generally, the addition of negation
to higher-order logic programming appears to offer an interesting and nontrivial
area of research, which we are currently pursuing.

\bibliographystyle{eptcs}
\bibliography{fics15}

\begin{thebibliography}{10}
\providecommand{\bibitemdeclare}[2]{}
\providecommand{\surnamestart}{}
\providecommand{\surnameend}{}
\providecommand{\urlprefix}{Available at }
\providecommand{\url}[1]{\texttt{#1}}
\providecommand{\href}[2]{\texttt{#2}}
\providecommand{\urlalt}[2]{\href{#1}{#2}}
\providecommand{\doi}[1]{doi:\urlalt{http://dx.doi.org/#1}{#1}}
\providecommand{\bibinfo}[2]{#2}

\bibitemdeclare{inproceedings}{Bezem99}
\bibitem{Bezem99}
\bibinfo{author}{Marc \surnamestart Bezem\surnameend} (\bibinfo{year}{1999}):
  \emph{\bibinfo{title}{Extensionality of Simply Typed Logic Programs}}.
\newblock In \bibinfo{editor}{Danny~De \surnamestart Schreye\surnameend},
  editor: {\sl \bibinfo{booktitle}{Logic Programming: The 1999 International
  Conference, Las Cruces, New Mexico, USA, November 29 - December 4, 1999}},
  \bibinfo{publisher}{{MIT} Press}, pp. \bibinfo{pages}{395--410}.

\bibitemdeclare{inproceedings}{Bezem01}
\bibitem{Bezem01}
\bibinfo{author}{Marc \surnamestart Bezem\surnameend} (\bibinfo{year}{2001}):
  \emph{\bibinfo{title}{An Improved Extensionality Criterion for Higher-Order
  Logic Programs}}.
\newblock In \bibinfo{editor}{Laurent \surnamestart Fribourg\surnameend},
  editor: {\sl \bibinfo{booktitle}{Computer Science Logic, 15th International
  Workshop, {CSL} 2001. 10th Annual Conference of the EACSL, Paris, France,
  September 10-13, 2001, Proceedings}}, {\sl \bibinfo{series}{Lecture Notes in
  Computer Science}} \bibinfo{volume}{2142}, \bibinfo{publisher}{Springer}, pp.
  \bibinfo{pages}{203--216}, \doi{10.1007/3-540-44802-0\_15}.

\bibitemdeclare{inproceedings}{Bezem2002}
\bibitem{Bezem2002}
\bibinfo{author}{Marc \surnamestart Bezem\surnameend} (\bibinfo{year}{2002}):
  \emph{\bibinfo{title}{Hoapata programs are monotonic}}.
\newblock In: {\sl \bibinfo{booktitle}{Proceedings NWPT’02, Institute of
  Cybernetics at TTU, Tallinn}}, pp. \bibinfo{pages}{18--20}.

\bibitemdeclare{article}{CharalambidisER14}
\bibitem{CharalambidisER14}
\bibinfo{author}{Angelos \surnamestart Charalambidis\surnameend},
  \bibinfo{author}{Zolt{\'{a}}n \surnamestart {\'{E}}sik\surnameend} \&
  \bibinfo{author}{Panos \surnamestart Rondogiannis\surnameend}
  (\bibinfo{year}{2014}): \emph{\bibinfo{title}{Minimum Model Semantics for
  Extensional Higher-order Logic Programming with Negation}}.
\newblock {\sl \bibinfo{journal}{{TPLP}}}
  \bibinfo{volume}{14}(\bibinfo{number}{4-5}), pp. \bibinfo{pages}{725--737},
  \doi{10.1017/S1471068414000313}.

\bibitemdeclare{article}{CharalambidisHRW13}
\bibitem{CharalambidisHRW13}
\bibinfo{author}{Angelos \surnamestart Charalambidis\surnameend},
  \bibinfo{author}{Konstantinos \surnamestart Handjopoulos\surnameend},
  \bibinfo{author}{Panagiotis \surnamestart Rondogiannis\surnameend} \&
  \bibinfo{author}{William~W. \surnamestart Wadge\surnameend}
  (\bibinfo{year}{2013}): \emph{\bibinfo{title}{Extensional Higher-Order Logic
  Programming}}.
\newblock {\sl \bibinfo{journal}{{ACM} Trans. Comput. Log.}}
  \bibinfo{volume}{14}(\bibinfo{number}{3}), p.~\bibinfo{pages}{21},
  \doi{10.1145/2499937.2499942}.

\bibitemdeclare{article}{CKW93-187}
\bibitem{CKW93-187}
\bibinfo{author}{Weidong \surnamestart Chen\surnameend},
  \bibinfo{author}{Michael \surnamestart Kifer\surnameend} \&
  \bibinfo{author}{David~Scott \surnamestart Warren\surnameend}
  (\bibinfo{year}{1993}): \emph{\bibinfo{title}{{HILOG:} {A} Foundation for
  Higher-Order Logic Programming}}.
\newblock {\sl \bibinfo{journal}{Journal of Logic Programming}}
  \bibinfo{volume}{15}(\bibinfo{number}{3}), pp. \bibinfo{pages}{187--230},
  \doi{10.1016/0743-1066(93)90039-J}.

\bibitemdeclare{inproceedings}{KRW05}
\bibitem{KRW05}
\bibinfo{author}{Vassilis \surnamestart Kountouriotis\surnameend},
  \bibinfo{author}{Panos \surnamestart Rondogiannis\surnameend} \&
  \bibinfo{author}{William~W. \surnamestart Wadge\surnameend}
  (\bibinfo{year}{2005}): \emph{\bibinfo{title}{Extensional Higher-Order
  Datalog}}.
\newblock In: {\sl \bibinfo{booktitle}{Short Paper Proceeding of the 12th
  International Conference on Logic for Programming, Artificial Intelligence
  and Reasoning (LPAR)}}, pp. \bibinfo{pages}{1--5}.

\bibitemdeclare{book}{lloyd}
\bibitem{lloyd}
\bibinfo{author}{John~W. \surnamestart Lloyd\surnameend}
  (\bibinfo{year}{1987}): \emph{\bibinfo{title}{Foundations of Logic
  Programming}}.
\newblock \bibinfo{publisher}{Springer Verlag},
  \doi{10.1007/978-3-642-83189-8}.

\bibitemdeclare{book}{MN2012}
\bibitem{MN2012}
\bibinfo{author}{Dale \surnamestart Miller\surnameend} \&
  \bibinfo{author}{Gopalan \surnamestart Nadathur\surnameend}
  (\bibinfo{year}{2012}): \emph{\bibinfo{title}{Programming with Higher-Order
  Logic}}, \bibinfo{edition}{1st} edition.
\newblock \bibinfo{publisher}{Cambridge University Press},
  \bibinfo{address}{New York, NY, USA}, \doi{10.1017/CBO9781139021326}.

\bibitemdeclare{inproceedings}{Wa91a}
\bibitem{Wa91a}
\bibinfo{author}{William~W. \surnamestart Wadge\surnameend}
  (\bibinfo{year}{1991}): \emph{\bibinfo{title}{Higher-Order Horn Logic
  Programming}}.
\newblock In \bibinfo{editor}{Vijay~A. \surnamestart Saraswat\surnameend} \&
  \bibinfo{editor}{Kazunori \surnamestart Ueda\surnameend}, editors: {\sl
  \bibinfo{booktitle}{Logic Programming, Proceedings of the 1991 International
  Symposium, San Diego, California, USA, Oct. 28 - Nov 1, 1991}},
  \bibinfo{publisher}{{MIT} Press}, pp. \bibinfo{pages}{289--303}.

\end{thebibliography}

\appendix

\end{document}